\newcommand{\beq}{\begin{equation}}
\newcommand{\eeq}{\end{equation}}
\newcommand{\beqq}{\begin{equation*}}
\newcommand{\eeqq}{\end{equation*}}
\newcommand{\ei}{\end{itemize}}
\newcommand{\bi}{\begin{itemize}}
\newcommand{\ee}{\end{enumerate}}
\newcommand{\be}{\begin{enumerate}}
\newtheorem{definition}{Definition}
\newtheorem{lemma}{Lemma}
\newtheorem{prop}{Proposition}
\newtheorem{corol}{Corollary}
\theoremstyle{remark}
\newif\if@restonecol
\newcommand{\ls}[1]
 {\dimen0=\fontdimen6\the\font \lineskip=#1\dimen0
\advance\lineskip.5\fontdimen5\the\font \advance\lineskip-\dimen0
\lineskiplimit=.9\lineskip \baselineskip=\lineskip
\advance\baselineskip\dimen0 \normallineskip\lineskip
\normallineskiplimit\lineskiplimit \normalbaselineskip\baselineskip
\ignorespaces }
\begin{document}
\title{NEWCAST: Anticipating Resource Management and QoE Provisioning for Mobile Video Streaming}

\author{\IEEEauthorblockN{Imen Triki,
Rachid El-Azouzi and Majed Haddad}\\
\IEEEauthorblockA {LIA/CERI, University of Avignon,\\
 Agroparc, BP 1228, 84911, Avignon, France\\
}}

\IEEEtitleabstractindextext{%
\begin{abstract}
The knowledge of future throughput variations in mobile networks becomes more and more possible today thanks to the rich contextual information provided by mobile applications and services and smartphone sensors. It is even likely that such contextual information, which may include traffic, mobility and radio conditions will lead to a novel agile resource management not yet thought of. In this paper, we propose an framework (called NEWCAST) that anticipates the throughput variations to deliver video streaming content. We develop an optimization problem that realizes a fundamental trade-off among critical metrics that impact the user's perceptual quality of experience (QoE) and the cost of system utilization. Both simulated and real-world throughput traces collected from \cite{dataset} were carried out to evaluate the performance of NEWCAST. In particular, we show from our numerical results that NEWCAST provides the efficiency that the new 5G architectures require in terms of computational complexity and robustness. We also implement a prototype system of NEWCAST and evaluate it in a real environment with a real player to show its efficiency and scalability compared to baseline adaptive bitrate algorithms.

 \end{abstract}
\vspace{0.5cm}
\begin{IEEEkeywords}
Adaptive video streaming, quality of experience, resource allocation, mobile network, throughput prediction.
\end{IEEEkeywords}}

\maketitle
\IEEEdisplaynontitleabstractindextext
\IEEEpeerreviewmaketitle

\section{Introduction}
Due to the breakthrough evolution of smartphones and their large penetration in daily life, mobile networks have witnessed an unrivaled growth of their mobile traffic posing new challenges to their resource management. The evolution of multimedia services in the Internet and the increasing consumer demand for high definition (HD) contents have even led the operators and the industry to rethink the way networks are dimensioned.
According to recent statistics carried out by Cisco \cite{Cisco16}, $82\%$ of all internet consumers' traffic will be http video streaming by $2021$, which explains the huge amount of care being accorded to video streaming services. 

In the literature, many studies were carried out to identify the critical metrics that may impact the user's perceptual QoE \cite{conf/im/VriendtVR13}. One of the key factors that may reflect the users' experience is the user engagement. Authors in \cite{DOB11} quantified the user engagement and identified some critical metrics that may affect it such as the buffering ratio, the rate of buffering, the start-up delay, the rendering quality and the average bitrate. Recent works in \cite{DOB11}\cite{Sigcomm13} developed approaches to understand how some quality metrics may influence the user engagement. It was revealed through \cite{Sigcomm13} that the rebuffering events have a significant impact on the QoE in the sense that the time spent on rebuffering during a video session can significantly reduce the user engagement. One other aspect that may impact the user engagement is the temporal variations of the video quality. Indeed, authors in \cite{Yim11} claimed that temporal variability in quality can be considered as worse as a constant quality with a lower average bitrate. Additional empirical results in \cite{quality} showed that humans appear to be more forgiving on buffer stalls than they are on video quality variations. Long buffer freezing events are even not rated worse than short buffer freezing towards high video quality levels. 

To improve the user engagement in real time, DASH (Dynamic Adaptive Streaming over HTTP) appeared as an emerging standard for video content delivery. Various commercial solutions adopting DASH have been proposed to improve the user's QoE such as Microsoft's smooth streaming, Adobe's HTTP dynamic streaming and Apple's live streaming. In DASH, each video file is divided into multiple small segments encoded at multiple quality levels \cite{Stockhammer:2011:DAS:1943552.1943572}, and it is up to the client to chose the most suitable quality level (bitrate) to stream the future segment.
In the literature, adaptive bitrate algorithms are classified in three main classes: buffer-based \cite{Huang14}, throughput-based \cite{Tian:2012:TAS:2413176.2413190} and buffer--throughput-based algorithms \cite{Yin:2015:CAD:2829988.2787486}. While the first class makes the decision based on the playback buffer occupancy state, the second class exploits the historical TCP throughput measurements \cite{TCPthroughput} to estimate the current bandwidth and instantaneously adapt the quality.

Within these classes, many adaptive strategies were proposed to reduce the interruption of the playback buffer \cite{Huang14, Bouaziz}. In \cite{Yin:2015:CAD:2829988.2787486}, authors proposed a predictive control algorithm that combines throughput and buffer occupancy information. \cite{Jiang:2012:IFE:2413176.2413189} developed a suite of techniques that guide the trade-offs between stability, fairness and efficiency leading to a general framework for robust video adaptation. In \cite{journals/corr/JosephV13}, authors were addressing the resource management issue in DASH QoE provisioning while considering user preferences on rebuffering and cost of video delivery.

Although there is a rich literature on methods used for optimizing the QoE in video streaming services, very few papers were exploiting the knowledge of future throughput variations for quality adaptation. The main idea of this paper is inspired from \cite{conf/infocom/LuV13} where authors designed a QoE-driven optimization framework that exploits the knowledge of future throughput variations to minimize the system utilization cost while avoiding rebuffering events. The main shortcoming of their approach is that it is only suited for classical video streaming as it ignores important visual quality metrics related to adaptive streaming.

Recent studies on contextual information have revealed a promising possibility of accurately predicting the future available resources over a medium horizon. For instance, context acquisition can target the monitoring of contextual situations as soon as they are created. The output can describe the contexts encountered as well as the likelihood of encountering similar contexts in the future \cite{5G}. This rises the opportunity to efficiently design the bitrate adaptation by exploiting the knowledge of future capacity variations \cite{Yim11, Zou:2015:API}.

Since video streaming is very bandwidth consuming, its delivery cost became too high for operators to support the increasing bandwidth demand with the arrival of ultra high definition (UHD) video quality, which requires $16$ times more pixels than full HD. However, it is important to develop solutions taking into account the delivery cost as well as the QoE through different metrics like rebuffering, average quality and switching in quality levels. In this paper, we design a QoE-driven optimization framework that realizes the trade-off between bandwidth utilization cost and content resolution under constraints on rebuffering events. It extends the model developed in \cite{conf/infocom/LuV13} by considering adaptive video streaming.
We summarize our main contributions as follows:
\begin{itemize}
\item We provide a general optimization framework for stored video delivery that accounts for heterogeneous client preferences, QoE models and capacity variations,
\item Under the constraint of no rebuffering events, we formally obtain an optimal solution where the transmission schedule is of a threshold type and the bitrate distribution is of an ascending order,
\item We propose an efficient heuristic, which we call NEWCAST, that performs close to the optimal approach. NEWCAST's performances are evaluated through simulations under the constraint of no rebuffering events, then under the hypothesis of tolerating buffer stall during the streaming session,
\item We study the characteristics of NEWCAST in terms of robustness (using real traces) and complexity. We then compare it to baseline adaptive bitrate algorithms,
\item We implement NEWCAST in a real environment and adapt it for real interactions with a real DASH player.

\end{itemize}

The rest of the paper is organized as follows: In Section \ref{sec:Assumptions}, we introduce the system model and formulate the optimization problem. In Section \ref{sec:Resoluion}, we discuss the properties of the optimal solution. In Section \ref{algo}, we propose optimal approaches and heuristic algorithms for the problem resolution with the constraint of no rebuffering events. Then, in Section \ref{algo2}, we consider the hyposesis to allow rebuffering events during the streaming session. Section \ref{sec:results} is dedicated to both simulations and numerical results and Section \ref{sec:experiments} is dedicated to experiments. We conclude the paper in Section \ref{sec:conc}.

\section{Problem formulation}\label{sec:Assumptions}
We consider a video file stored in a video streaming server and divided into $N$ segments of equal length in second. Each segment is composed of $S$ frames and encoded at $L$ different bitrates $\{b_1, \ldots, b_L\}$, such that $b_i < b_j$ for $i<j$. To stream the video, the client requests the segments to the server one by one and indicates at each request the video quality (bitrate) needed for the streaming. Denote by $b(t)$ the video bitrate being streamed at time $t$, and by $\gamma(t)$ the quotient $\frac{b_L}{b(t)}$ where $b_L$ is the highest video bitrate. We assume that, at the client side, the video frames are played at a rate of $\lambda$ frames per second (fps), and that, before starting the video, a prefetching stage is introduced till having $Q_0$ frames in the playback buffer. To avoid buffer overflows, we assume that the playback buffer is very large.

In our problem modelling, we exploit the knowledge of the user's future available throughput (hereinafter called network capacity) to optimize the system usage cost and the QoE.
Let $c(t)$ be the network future capacity at time $t$ and $r(t)$ be the transmission bitrate of the user at that time, note that $0\leq r(t)\leq c(t)$. Inspired by \cite{conf/infocom/LuV13}, we define the system utilization cost as
 \beq\label{eq:utilization}
 \sigma=\frac{1}{T}\int_0^T \frac{r(t)}{c(t)}dt,
 \eeq
where $\frac {r(t)}{c(t)}$ is the proportion of resources allocated to the user at time $t$ (can be interpreted as the proportion of time the user is occupying the network if we use discretize the time ), and $T$ defines the video length in second.
We compute the number of frames that will be streamed with quality level $j$ during the streaming session as
\beq\label{eq:qoe}
 \int _0^T \frac{\delta_{\{b(t)=b_j\}} r(t)\lambda }{b(t)} dt =\int _0^T \frac{\gamma_j(t) r(t) \lambda }{b_L} dt,
 \eeq
where
\beq
\begin{array}{cccl}
 \gamma_{j}(t) =
 \left\{
 \begin{array}{cccl}
 \displaystyle \gamma(t) & \quad\text{if} \,\,\, b(t) = b_{j}; & j \in [1\ldots L], \\
 \displaystyle 0 & \text{otherwise}. \\
 \end{array}
 \right.
\end{array}
\eeq
Assume that the user's perception on the video quality levels can be expressed by the mean of weights $\{w_1, \ldots, w_L\}$ such that $w_j$ corresponds to quality level $j$ and $w_{i} < w_{j}$ for $i <j$. Hence, we define the weighted average quality of the video as

 \beq
 \rho=\frac{\sum_{j=1}^{j=L}w_{j}\int _0^T \gamma_{j}(t)r(t) \lambda dt}{b_L \times (N \times S )} =\frac{\sum_{j=1}^{j=L}w_{j}\int _0^T \gamma_{j}(t)r(t)dt}{S_{L}},
 \eeq
where $S_{L}$ represents the video total size in bits when it is coded with the highest bitrate level $b_{L}$, i.e., $ S_L = \frac{b_L \times N \times S }{\lambda} $.

Normally, a high video quality comes at a high cost. However, it may happen that a user wishes to reduce his cost in return of a low quality, or that an operator wishes to save the network resources for further usage. To cover such situations, we define a positive parameter $\pi$ to make the tradeoff between system utilization cost and video quality. Therefore, we define our optimization cost function as
\vspace{-0.2cm}
 $$\mathcal{F}=\sigma - \pi \times \rho.$$

Let $u(t)$ be the \emph{cumulative} number of arrival frames at time $t$ and $l(t)$ be the \emph{cumulative} number of frames being already played at that time. Therefore, we define the buffer underflow constraint as $u(t) \ge l(t) \ \forall t\le T$. Given the transmission bitrate $r(t)$ and the corresponding video bitrate $b(t)$, we express the network frame rate as $\frac{\lambda \ r(t)}{b(t)}$.

Denote by $(r, \gamma)$ the video transmission strategy during the streaming session , where $r$ defines the transmission schedule and $\gamma$ characterizes the distribution of video bitrates. We start with the case where no rebuffering events will happen during the streaming session. Hence, we summarize our optimization problem, as follows

\begin{equation}\label{optim}
 \underset{(r,\gamma)}{\min} \ \mathcal{F}(r,\gamma)=
 \ \ \frac{1}{T}\int _0^T \frac{r(t)}{c(t)}dt - \pi \times \frac{\sum_{j=1}^{j=L}w_{j}\int_0^T \gamma_{j}(t)r(t) dt}{S_{L}}
 \end{equation}
\[s.t \left\{
\begin{array}{l l}
 \int_0^t \frac{ \lambda \ c(t) \gamma_{1}}{b_{L}} \ge l(t), & \forall t \le T \\
 \\
 \int_0^t \sum_{j=1}^{j=L} \frac{ \lambda \ r(t) \gamma_{j}(t)}{b_{L}} \ge l(t), & \forall t \le T\\
 \\
 \int_0^T\sum_{j=1}^{j=L} \frac{\lambda\ r(t) \gamma_{j}(t) }{b_{L}} =l(T), \\
\end{array} \right. \]
where the first constraint ensures the existence of at least one solution which corresponds to a mono-quality streaming using the lowest video bitrate and the whole resources.
At the end of Section \ref{results}, we study the case where several rebuffering events are tolerated during the streaming session.

\section{Properties of optimal solution without rebuffering events }\label{sec:Resoluion}

\subsection{The threshold scheme for transmission schedule }\label{sec:ThresholdPolicy}
\begin{definition}\label{threshold}
Giving the network capacity $c$, we define the threshold transmission schedule by
\beq
\begin{array}{cccl}
 r_{th}(t) =
 \left\{
 \begin{array}{cccl}
 \displaystyle c(t)& \quad \text{if}\ c(t) \ge \alpha \\
 \displaystyle 0 & \quad \text{otherwise.} \\
 \end{array}
 \right.
\end{array}
\eeq
\end{definition}
\begin{prop}
Assume that there exists a feasible solution that satisfies the constraints in (\ref{optim}), then there exists an optimal strategy $(r_{th}, \gamma_{r_{th}})$ of optimization problem (\ref{optim}), where $r_{th}$ is a threshold transmission schedule.
\end{prop}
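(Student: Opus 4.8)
The plan is to start from an optimal strategy $(r^*,\gamma^*)$, whose existence follows from the feasibility hypothesis together with a compactness argument on the bounded feasible set, and then to transform $r^*$ into a threshold schedule by an interchange argument that never increases $\mathcal{F}$ nor destroys feasibility. The organising observation is that, for a fixed coding profile $\gamma$, the objective is linear in $r$:
$$\mathcal{F}(r,\gamma)=\int_0^T r(t)\,\phi(t)\,dt,\qquad \phi(t)=\frac{1}{T\,c(t)}-\frac{a}{S_L}\sum_{j=1}^{L} w_j\,\gamma_j(t).$$
Linearity in $r$ pushes the optimum towards a bang-bang profile $r(t)\in\{0,c(t)\}$, so the whole content of the proposition is to show that the ``on'' set can be taken to be exactly $\{t:\ c(t)\ge\alpha\}$ for a single level $\alpha$. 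Note that transmitting at instant $t$ is favourable precisely when $\phi(t)<0$, and since $1/(T c(t))$ is strictly decreasing in $c(t)$, higher capacity always pushes $\phi$ more negative; this monotonicity is the seed of the global threshold.

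The core step I would carry out is a pairwise exchange acting jointly on $(r,\gamma)$. Suppose $r^*$ is not of threshold type; then there are two positive-measure sets of instants, around $t_1$ and $t_2$ with $c(t_1)>c(t_2)$, where $r^*(t_1)<c(t_1)$ yet $r^*(t_2)>0$. I move transmission off $t_2$ and onto $t_1$, choosing the increments $\varepsilon_1,\varepsilon_2$ so that the number of delivered frames is preserved, $\varepsilon_1/b(t_1)=\varepsilon_2/b(t_2)$ (recall the instantaneous frame rate is $\lambda\, r(t)/b(t)$), and re-selecting the quality levels on the two slots. Moving a fixed number of frames to the higher-capacity instant $t_1$ lowers the utilisation term through the gap $1/c(t_1)<1/c(t_2)$, and iterating concentrates all transmission on the highest-capacity instants until $r$ equals $c(t)$ above a threshold and vanishes below it, i.e. $r=r_{th}$.

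I expect two obstacles, and the second is precisely what distinguishes this from the single-level setting of \cite{conf/infocom/LuV13}. First, the no-re-buffering constraint $\int_0^t\sum_j \lambda r\gamma_j/b_L\ge l(t)$ can be violated by a bare threshold schedule when the high-capacity instants lie late in the session, since concentrating delivery there starves the buffer early on; exchanges that move transmission to an \emph{earlier} high-capacity slot only raise the cumulative-delivery curve and are always admissible, while later moves must be capped so the curve stays above $l(t)$, and here the feasibility hypothesis (equivalently the first constraint, certifying that full-capacity transmission with the lowest level $b_1$ already keeps the buffer non-empty) guarantees enough total capacity to tune $\alpha$ and the coding on the boundary set $\{t:\ c(t)=\alpha\}$ so as to deliver exactly $l(T)$ frames. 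Second, and more delicate, the QoE term makes the switch-on rule $\phi(t)<0$ depend on $t$ not only through $c(t)$ but also through the local quality level, so fixing $\gamma^*$ and optimising $r$ alone would yield a $t$-dependent, not a pure capacity, threshold; the resolution is to use the joint freedom in $\gamma$ to re-code consistently so that, at equal capacities, the on/off decision and the chosen level agree, thereby collapsing the rule to a single global $\alpha$. Verifying that the buffer-feasible exchanges and this re-coding can be performed simultaneously, and that the limit of the procedure stays within the feasible set, is the step that requires the most care.
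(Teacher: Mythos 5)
Your overall strategy is the paper's own: an interchange argument that repeatedly shifts transmission from lower-capacity instants to higher-capacity ones, checks that earlier moves never hurt the buffer constraint while later moves must be capped against it, and concludes that iterating the exchange concentrates $r$ on $\{t: c(t)\ge\alpha\}$. The linearity/bang-bang framing is a nice way to motivate why the optimum sits at $r(t)\in\{0,c(t)\}$, and your treatment of the feasibility obstacle matches the paper's.

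There is, however, one concrete soft spot in your exchange step. You normalise the swap by preserving the \emph{frame} count, $\varepsilon_1/b(t_1)=\varepsilon_2/b(t_2)$, while allowing the two slots to carry different quality levels. If $b(t_1)>b(t_2)$ then $\varepsilon_1>\varepsilon_2$, and the change in the utilisation term is $\varepsilon_1/c(t_1)-\varepsilon_2/c(t_2)$, whose sign is \emph{not} controlled by $c(t_1)>c(t_2)$ alone; simultaneously the QoE term $\rho$ changes because frames migrate between weight classes $w_j$. So the claim that each exchange ``lowers the utilisation term through the gap $1/c(t_1)<1/c(t_2)$'' fails in exactly the multi-level situation that distinguishes this proposition from the single-level result of the cited prior work. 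The paper sidesteps this by imposing that the two intervals involved in a swap use the \emph{same} bitrate level (its condition that ``the bitrate levels used at the intervals concerned by the swapping-operation must be the same''), which preserves both the data volume and the frame count, makes the utilisation comparison a clean consequence of $c(t_1)\ge c(t_2)$, and leaves $\rho$ untouched because only the \emph{order} of levels changes, not their proportions; the reordering of levels is then dealt with separately (Proposition 2). Your proposed repair --- ``use the joint freedom in $\gamma$ to re-code consistently'' --- is precisely the missing lemma: as stated it is an assertion, not an argument, and you would need to either restrict to same-level swaps as the paper does or prove that a compensating re-coding exists that keeps both terms of $\mathcal{F}$ from increasing.
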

This propriety was actually inspired by \cite{conf/infocom/LuV13}. Nevertheless, authors in \cite{conf/infocom/LuV13} assumed a classical video streaming with only one bitrate level, whereas we consider adaptive video streaming with multiple bitrate levels, which makes our optimization problem more appealing as it fits current video streaming schemes.

\begin{proof}
Let $c$ and $r$ be the network capacity and the user transmission bitrate on a given interval of time $[0,\epsilon]$.
 \begin{figure}[t]
 \begin{center}
 \includegraphics[width=6cm,height=4cm] {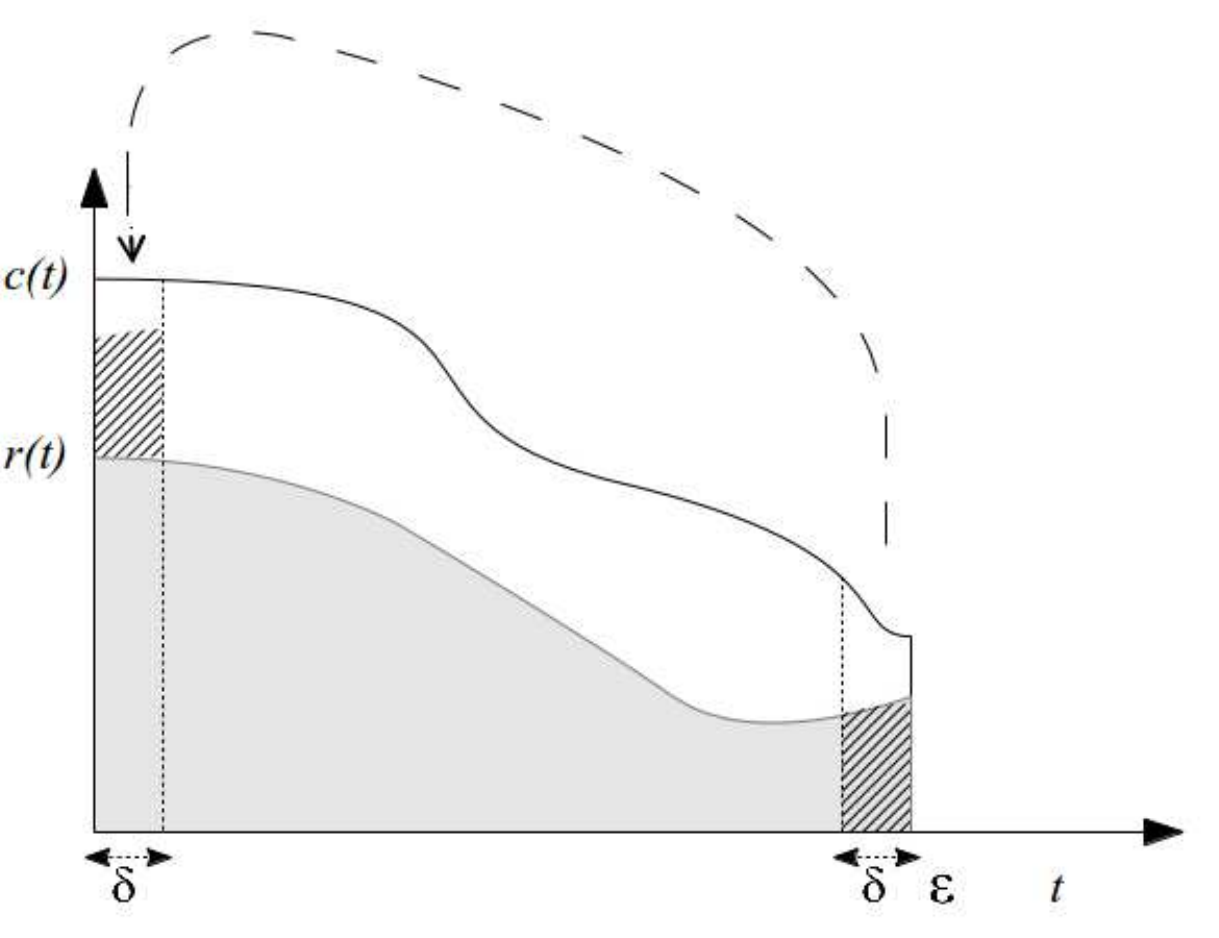}
 \vspace{-.1cm}
 \caption{Sketch of proof of the threshold strategy. Here, the hatched area on the right can be entirely shifted to the left, which gives a value of $\beta$ equal to $1$.}
 \label{fig:Demo}
 \end{center}
 \vspace*{-.5cm}
 \end{figure}
Without loss of generality \footnote{The proof still holds for a monotonically increasing $c$.} and for the sake of illustration, we choose an interval of time where $c$ is monotonically decreasing as shown in Fig. \ref{fig:Demo}. As we have $r(t) \leq c(t)\ \forall \ t \in [0,\epsilon]$, then $\exists \ (\delta,\beta) \in [0,\frac{\epsilon}{2}]\times[0,1]$ such that $\forall \ t \in [0,\delta]$
\beq\label{eq:cap-ineq}
c(t) \ge c(t+\epsilon -\delta)
\eeq
$\text{and} $
\beq \label{eq:cap-ineq-1} \int _0 ^\delta \frac{ r(t)+ \beta r(t+\epsilon - \delta)}{c(t)}dt \le \delta,
\eeq
where Inequality (\ref{eq:cap-ineq}) derives from the decreasing pace of $c$, and Inequality (\ref{eq:cap-ineq-1}) derives from the fact that some data at the end can be transmitted beforehand. On the other hand, we have
\begin{eqnarray}
\int _0^\epsilon \frac{r(t)}{c(t)}dt =
 \int _0^\delta \frac{r(t)+\beta r(t+\epsilon-\delta) }{c(t)}dt +
 \int _\delta^{\epsilon-\delta} \frac{r(t)}{c(t)}dt\nonumber \\
 \ + \int _{\epsilon-\delta}^\epsilon \frac{r(t)}{c(t)}dt
 - \int _{0} ^\delta \frac{\beta r(t+\epsilon-\delta)}{c(t)}dt.
\end{eqnarray}
Using Inequality (\ref{eq:cap-ineq}), we obtain
\begin{eqnarray}\label{eq:util-ineq}
\int _0^\epsilon \frac{r(t)}{c(t)}dt \ge
 \int _0^\delta \frac{r(t)+\beta r(t+\epsilon-\delta) }{c(t)}dt +
 \int _\delta^{\epsilon-\delta} \frac{r(t)}{c(t)}dt \nonumber\\
 + \int _{\epsilon-\delta}^\epsilon \frac{r(t)}{c(t)}dt)
 - \int _{\epsilon-\delta} ^\epsilon \frac{\beta r(t)}{c(t)}dt.
\end{eqnarray}
Obviously, if
$$ \int _0 ^\delta \frac{\ r(t)+\beta r(t+\epsilon - \delta)}{c(t)}dt = \delta,$$
then all the given capacities in $[0,\delta]$ will be used, i.e., all the white surface in Fig.~\ref{fig:Demo} will be filled. In that case, we define a new transmission schedule $r^{\prime}$ such that
\beq
\begin{array}{cccl}
 r^{\prime}(t) =
 \left\{
 \begin{array}{cccl}
 \displaystyle c(t) & \quad\mbox t \in [0,\delta] \\
 \displaystyle r(t) & \quad\mbox t \in ]\delta,\epsilon-\delta[ \\
 \displaystyle (1-\beta) r(t) & \quad\mbox t \in [\epsilon-\delta,\epsilon] \\
 \end{array}
 \right.
\end{array}
\eeq
which gives
$$ \int _0^\epsilon \frac{r(t)}{c(t)}dt \ge \int _0^\epsilon \frac{r^{\prime}(t)}{c(t)}dt $$
Otherwise, if
\beq
\int _0 ^\delta \frac{\ r(t)+\beta r(t+\epsilon- \delta)}{c(t)}dt < \delta, \eeq
then $\beta$ will be equal to $1$ since our objective is to shift as much data as possible from the times where the capacity is low to the times where the capacity is high. Therefore, to completely use the highest capacities, we must repeat the same shifting operation on $[0,\epsilon-\delta]$ considering a new transmission function $r^{\prime}$ verifying
\beq
\begin{array}{cccl}
 \left\{
 \begin{array}{cccl}
 \displaystyle \int _0 ^\delta \frac{r^{\prime}(t)}{c(t)}dt = \int _0 ^\delta \frac{\ r(t)+\beta r(t+\epsilon - \delta)}{c(t)}dt \\
 \displaystyle r^{\prime}(t)= r(t) \ \forall \ t \in [\delta,\epsilon -\delta].\\
 \end{array}
 \right.
\end{array}
\eeq
In both cases, Inequality (\ref{eq:util-ineq}) holds, which means that the highest capacities are less expensive than the lowest capacities in terms of network utilization cost if they were used for transmitting data.
If we keep repeating the shifting operation on all the future horizon, we end up having all the highest capacities entirely used and all the lowest one unused, which is clearly a threshold transmission schedule as defined in Definition \ref{threshold}.

Now, we assume that, knowing $c$, there exists a feasible solution $(r,\gamma)$ that satisfies the constraints in (\ref{optim}). To perform the data shifting operation on the transmission schedule, three main conditions should be verified:
(i) The shifted data must have the same video bitrate as the bitrate used in the shifted-to time,
(ii) data shifting shall not interrupt a segment transmission schedule,
(iii) data shifting shall not violate the stall constraints.

Actually, shifting the data transmission can be either done to the left (earlier) or to the right (later).
As we assume a very large playback buffer, sending the video data at earlier times will not cause packets rejection and, thus, will not cause video stalls. In other words, any data shifting to earlier times of higher capacities will be performed without violating the stall constraints.
However, when the higher capacity values come later, the data shifting must be checked whether it violates the stall constraints or not.
As we only shift the data transmission without changing their corresponding video bitrates, we end up having a new bitrate level strategy $\gamma_{r_{th}}$ that gives the same weighted average quality as given by $\gamma$.
Thereby, the resulting strategy $(r_{th},\gamma_{r_{th}})$ outperforms strategy $(r,\gamma)$, which completes the proof.
\end{proof}
In practice, the setting of the transmission threshold $\alpha$ does not follow the data shifting process of the proof.
We will thus design an approach to build a threshold strategy for the transmission schedule.

\subsection{Ascending bitrate level strategy}\label{sec:ascendingapproch}
In this section, we study the proprieties of the bitrate level strategy under a threshold based transmission schedule. More specifically, we analyze the impact of the video quality levels' order on the setting of $\alpha$.
\begin{definition}
We say a bitrate level strategy is {\bf ascending} if the quality levels of the video segments increases during the session, i.e., for all $0\leq t\leq t^{\prime}\leq T$
$$
b(t) \leq b(t^{\prime}), \mbox{ i.e., } \gamma(t)\geq \gamma(t^{\prime}).
$$
\end{definition}

\begin{prop}
Assume that there exists a threshold-based solution $(r_{th},\gamma)$ that satisfies the constraints in (\ref{optim}), then there exists a threshold-based ascending bitrate level solution $(r^{\prime}_{th},\gamma^{\prime})$ that optimizes problem in (\ref{optim}).
\end{prop}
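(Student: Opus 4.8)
The plan is to prove this by a rearrangement (exchange) argument that keeps the transmission schedule $r_{th}$ fixed and merely permutes the quality labels assigned to the segments until they are sorted into ascending order; the required solution will be $(r^{\prime}_{th},\gamma^{\prime})$ with $r^{\prime}_{th}=r_{th}$. The starting point is an optimal threshold solution $(r_{th},\gamma)$, which exists by the preceding proposition. First I would record two invariances. Since $\sigma=\frac{1}{T}\int_0^T \frac{r_{th}(t)}{c(t)}dt$ depends only on the schedule, it is unaffected by any relabelling of the segments' quality levels. Likewise, writing $D_j$ for the total number of bits transmitted at level $j$, one checks that $\int_0^T \gamma_j(t)r(t)\lambda\,dt=\frac{b_L\lambda}{b_j}D_j$, so that $\rho$ depends only on the multiset $\{D_j\}$, i.e.\ on how many segments use each level, and not on their ordering in time. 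Permuting the labels preserves this multiset, hence both $\sigma$ and $\rho$ --- and so the cost $\mathcal{F}$ --- are invariant. It also preserves the total transmitted volume $\sum_j D_j=\int_0^T r_{th}$, so $r_{th}$ (with the same $\alpha$) remains a valid threshold schedule that terminates exactly at $T$ and meets the terminal constraint $u(T)=l(T)$.

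The only constraint that can be affected is the underflow constraint $u(t)\ge l(t)$, and note that $l(t)$ is fixed by the playback timeline and does not depend on the encoding. To analyse it I would reparametrise the delivered data by cumulative transmitted bits $B(t)=\int_0^t r_{th}(s)\,ds$, which is itself independent of the labelling. A segment coded at level $j$ forms a block of width $Sb_j/\lambda$ bits carrying a constant frame density $\lambda/b_j$, and therefore contributes exactly $S$ frames; the cumulative number of delivered frames is $u(t)=\Phi(B(t))$, where $\Phi(B)=\int_0^B \phi$ and $\phi$ is the frame-density profile obtained by concatenating the segment blocks in playback order. The heart of the argument is the rearrangement inequality: concatenating the blocks in order of decreasing density --- equivalently, increasing bitrate, i.e.\ ascending quality --- maximises $\Phi(B)$ simultaneously for every $B$. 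I would establish this by an adjacent-swap lemma: if an earlier segment has lower density (higher quality) than the segment immediately following it, swapping their levels raises $\Phi$ on their combined bit-interval and leaves it unchanged elsewhere; finitely many such swaps, in bubble-sort fashion, produce the ascending profile $\gamma^{\prime}$.

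Putting the pieces together, let $(r_{th},\gamma^{\prime})$ be the ascending relabelling. By the rearrangement inequality its cumulative frame count dominates that of the original, $u^{\prime}(t)=\Phi^{\prime}(B(t))\ge \Phi(B(t))=u(t)\ge l(t)$ for all $t\le T$, so the underflow constraint is still satisfied; the first (mono-coded) feasibility constraint is unaffected as it involves only $\gamma_1$ and $c$. Since $\sigma$, $\rho$ and the terminal constraint are all preserved, $(r_{th},\gamma^{\prime})$ is feasible and has exactly the same cost as the optimal $(r_{th},\gamma)$, hence is itself optimal while being of threshold-ascending type, which proves the claim.

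I expect the main obstacle to be making the adjacent-swap step fully rigorous: the blocks have different widths (high quality is wide and short, low quality is narrow and tall), so one must verify that swapping two out-of-order blocks dominates \emph{pointwise} over their whole combined bit-interval and not merely at its endpoints, and one must respect the bookkeeping conditions already flagged in the previous proof (a swap must not split a segment and must keep the schedule of threshold type). The remaining work --- the invariance of $\sigma$ and $\rho$ and the bit reparametrisation --- is routine once these identities are in place.
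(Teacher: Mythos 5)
Your proposal is correct and follows essentially the same route as the paper: an exchange argument in which swapping an out-of-order pair of quality levels leaves $\sigma$ and $\rho$ (and the terminal constraint) unchanged while raising the cumulative arrival curve pointwise --- your concave-versus-convex block comparison in cumulative-bit space is exactly the paper's computation $u^{\prime}(t)-u(t)=\beta-\int_{\tau^{\prime}}^{t}\bigl(fr_{th}(s)-fr^{\prime}_{th}(s)\bigr)\,ds\ge 0$ --- so the underflow constraint survives and the sorted solution remains feasible and optimal. The only difference in framing is that the paper also notes the created slack $u^{\prime}(s_n)=u(s_n)+\beta$ at the binding instant may allow the threshold $\alpha$ to be raised and the cost strictly reduced, whereas you fix the schedule of an already-optimal threshold solution and argue pure invariance, which suffices for the statement as given.
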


\begin{proof}
Pick a suite of $N$ segments with a non ascending order quality levels, in a way that they can be streamed without video stalls over the future horizon. Then, according to this quality levels' order, set a threshold-based solution $(r_{th},\gamma)$ with threshold $\alpha$ such that, beyond this threshold, the first constraint violation will occur at time $t=s_n$.
Suppose that, under this solution, two bitrate levels $b_1$ and $b_2$ will be respectively streamed over $[\tau, \tau + \delta]$ and $[\tau^{\prime}, \tau^{\prime} + \delta^{\prime}]$ as depicted in Fig.~\ref{fig:DemoAscendingRate}, such that
\beqq
 \displaystyle \tau+\delta < s_n, \ \ \displaystyle \tau^{\prime} > s_n, \ \ \displaystyle b_1>b_2,
 \eeqq
 and
 \beqq
 \displaystyle \int _{\tau}^{\tau+\delta} {r_{th}(t)}dt = \int _{\tau^{\prime}}^{\tau^{\prime}+\delta^{\prime}} {r_{th}(t)}dt.
\eeqq
 \begin{figure}[t]
 \begin{center}
	\includegraphics [width=6.5cm,height=3.5cm] {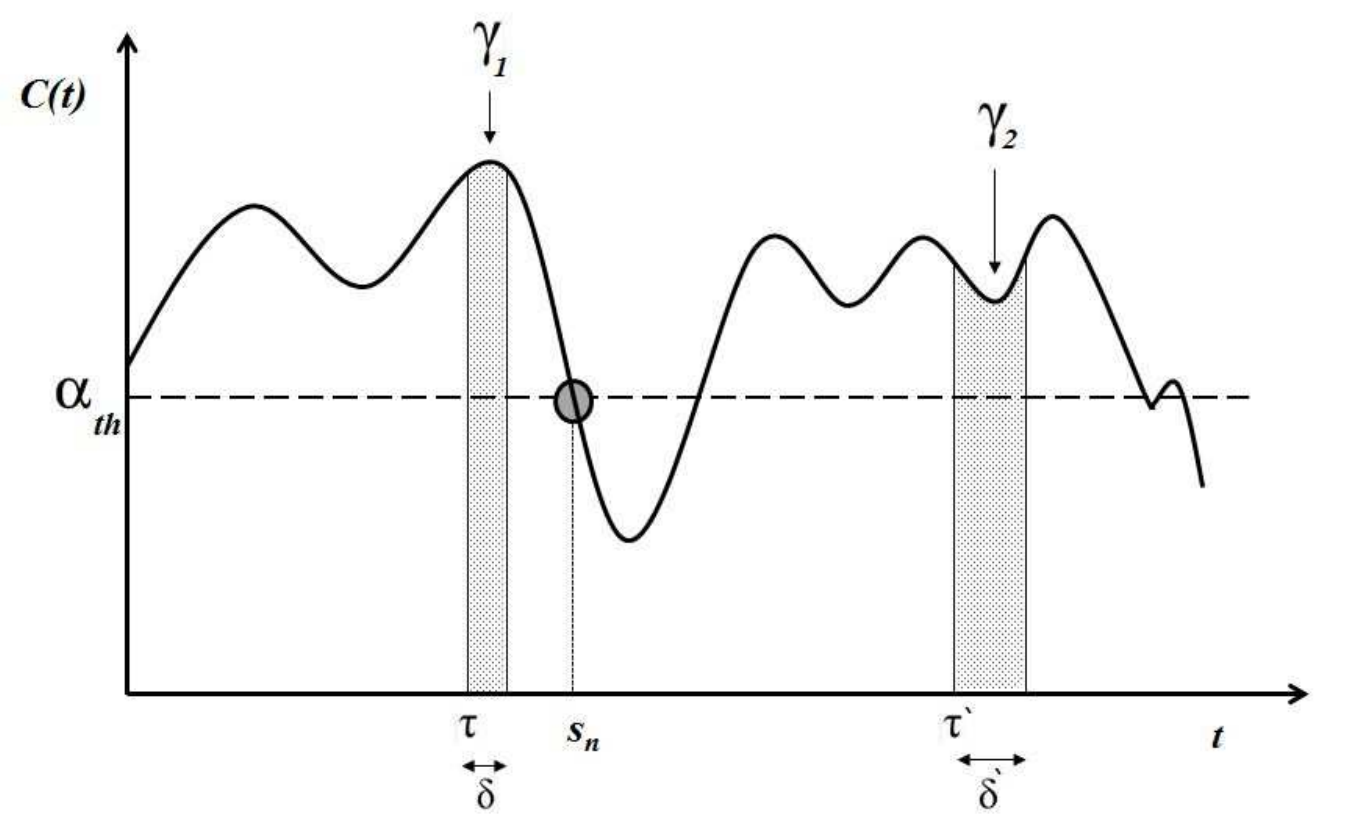}
 \vspace{-.1cm}
	\caption{Ascending bitrate level strategy.}
	\label{fig:DemoAscendingRate}
 \end{center}
 \vspace*{-.5cm}
 \end{figure}
Let $fr_{th}(t)$ be the network frame rate at time $t$. As we have $b_1>b_2$, then the number of frames that will be streamed during $[\tau^{\prime}, \tau^{\prime} + \delta^{\prime}]$ is greater than the number of frames that will be streamed during $[\tau, \tau + \delta]$. Therefore, $\exists \ \beta > 0 $ such that
 \beq
 \int _{\tau^{\prime}}^{\tau^{\prime}+\delta^{\prime}} {fr_{th}(t)}dt = \int _{\tau}^{\tau+\delta} {fr_{th}(t)}dt + \beta.
 \eeq
Suppose that we switch between $b_1$ and $b_2$ over these two intervals of time. Then, the number of cumulative received frames at $s_n$ will be increased by $\beta$. Let $u$ and $u^\prime$ be the cumulative number of arrival frames functions before and after switching the bitrates. therefore, we have
\beq
u^{\prime}(s_n) = u(s_n) + \beta.
\eeq

Actually, if $u^{\prime}(s_n)$ is large enough and allows increasing the threshold beyond $\alpha$ without violating the stall constraint at $t=s_n$ and later, then the cost function will be reduced. Otherwise, the threshold remains the same without changing the system performance. In fact, as explained in the previous section, streaming the data beforehand will only add more flexibility toward the stall constraints since the buffer is assumed to be very large. We show by the sequel that, even if we switch between the two bitrate levels the streaming will remain without video stalls under the same threshold since $u^{\prime}\ge u(t)\ \forall t \in [0,T]$ (see Fig.~\ref{fig:switch}).
Let $fr^{\prime}_{th}$ be the network frame rate function after switching. Then, we have
\beq
fr^{\prime}_{th}(t)\ > fr_{th}(t) \ \forall t \in [\tau,\tau+\delta[,
\eeq
\beq \label{eq17}
fr^{\prime}_{th}(t)\ < fr_{th}(t) \ \forall t \in [\tau^{\prime},\tau^{\prime}+\delta^{\prime}[,
\eeq
\beq \label{eq18}
\int_{\tau} ^ {\tau+\delta} fr^{\prime}_{th}(t)-fr_{th}(t)\ dt = \int_{\tau^{\prime}} ^ {\tau^{\prime}+\delta^{\prime}} fr_{th}(t)-fr^{\prime}_{th}(t)\ dt = \beta.
\eeq
We further define $u^{\prime}$ as
\beq
\begin{array}{cccl}
 u^{\prime}(t) =
 \left\{
 \begin{array}{ll}
 \displaystyle u(t) & \quad\mbox t < \tau \\
 \displaystyle u(\tau) + \int_\tau ^ t fr^{\prime}_{th}(s)\ ds & \quad\mbox t \in [\tau,\tau+\delta[ \\
 \displaystyle u(t) + \beta & \quad\mbox t \in [\tau+\delta, \tau^{\prime}[ \\
 \displaystyle u(\tau^{\prime}) + \beta + \int_{\tau^{\prime}} ^ t fr^{\prime}_{th}(s)\ ds & \quad\mbox t \in [\tau^{\prime},\tau^{\prime}+\delta^{\prime}[\\
 \displaystyle u(t) & \quad\mbox t \ge \tau+\delta^{\prime}.\\
 \end{array}
 \right.
\end{array}
\eeq
 \begin{figure}[t]
 \begin{center}
	\includegraphics [width=6.5cm,height=3.5cm] {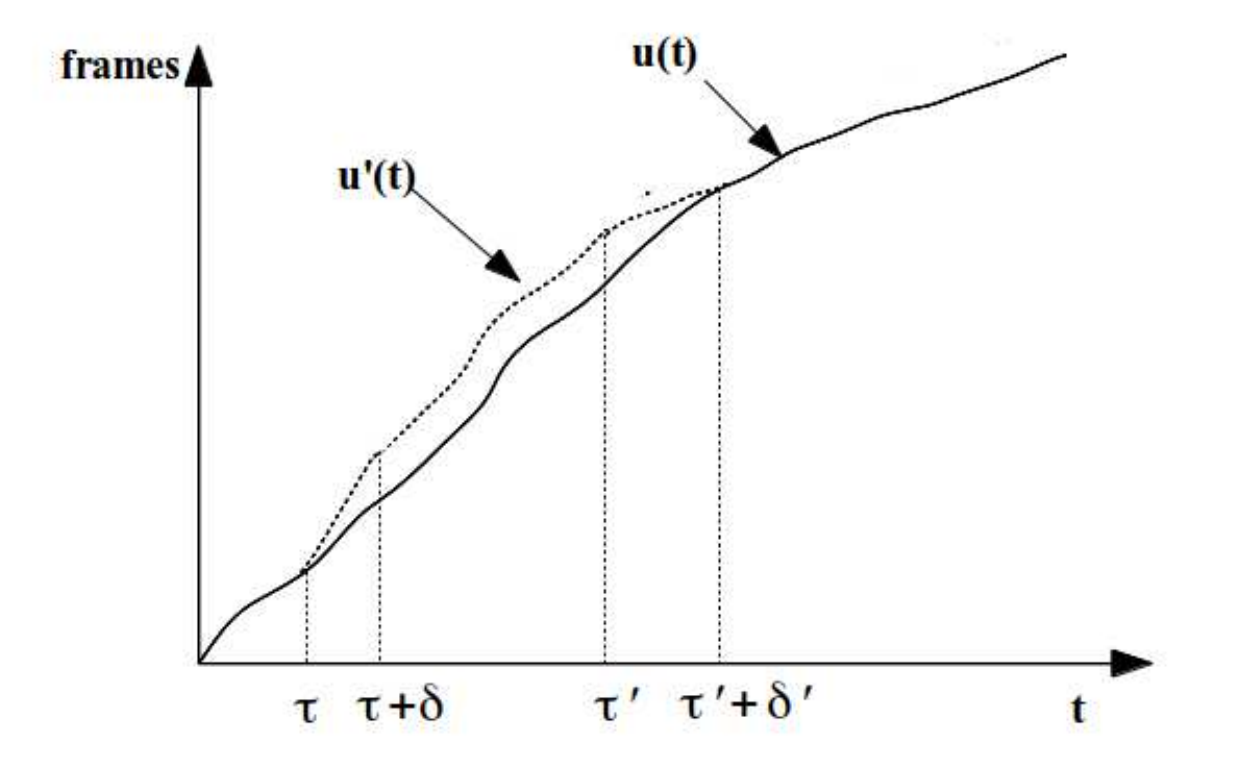}
 \vspace{-.1cm}
	\caption{Impact of bitrates switching on the cumulative number of arrival frames $u$.}
	\label{fig:switch}
 \end{center}
 \vspace*{-.5cm}
 \end{figure}
Actually, the cumulative watched frames function $l$ will remain the same as the playback frame rate $\lambda$ remains the same for all bitrate levels. Now, we see clearly that $\forall$ $t \not \in [\tau^{\prime},\tau^{\prime}+\delta^{\prime}[ \ , \ u^{\prime}(t)\ge u(t)$. However, for $t \in [\tau^{\prime},\tau^{\prime}+\delta^{\prime}[$, we have
\beq
u^{\prime}(t)-u(t) = \beta - \int_{\tau^{\prime}} ^ t fr_{th}(s)-fr^{\prime}_{th}(s)\ ds,
\eeq
which is positive according to (\ref{eq17}) and (\ref{eq18}).
To conclude, putting the segments in an ascending bitrates' order may allow a higher transmission threshold which further reduces the cost function without degrading the average quality of the video.
\end{proof}

\section{Algorithmic approaches under no rebuffering events
constraint } \label{algo}
In this section we solve optimization problem (\ref{optim}) through algorithmic approaches based on the properties of the optimal solution characterised in the previous section. We provide an approach that compute the optimal threshold-based solution but this algorithm is faced with a high computational complexity necessary to obtain the optimal solution.  Due to this shortcoming, we propose an alternative heuristic approaches to obtain nearly optimal solutions under the assumption of no rebuffering events during the session. Afterwards, we extend the study to the case where the number of video playback stalls can be tolerated to a certain level.

\subsection{ Optimal threshold-based solution }
We summarize here our global optimal approach in three main steps as illustrated in Fig.~\ref{fig:GenericAlgo}: (i) first, we look for all the possible values of $\alpha \in [\alpha_{min},\alpha_{max}]$ that satisfy the constraints in (\ref{optim}) and associate to each one the birates level strategy that gives the highest possible weighted average quality, (ii) for each threshold and its corresponding video quality, we compute the resulting cost function $\mathcal{F}$, (iii) the optimal solution corresponds to the one that minimizes $\mathcal{F}$. 
 \begin{figure}[h]
 \begin{center}
 	\includegraphics [width=8.5cm] {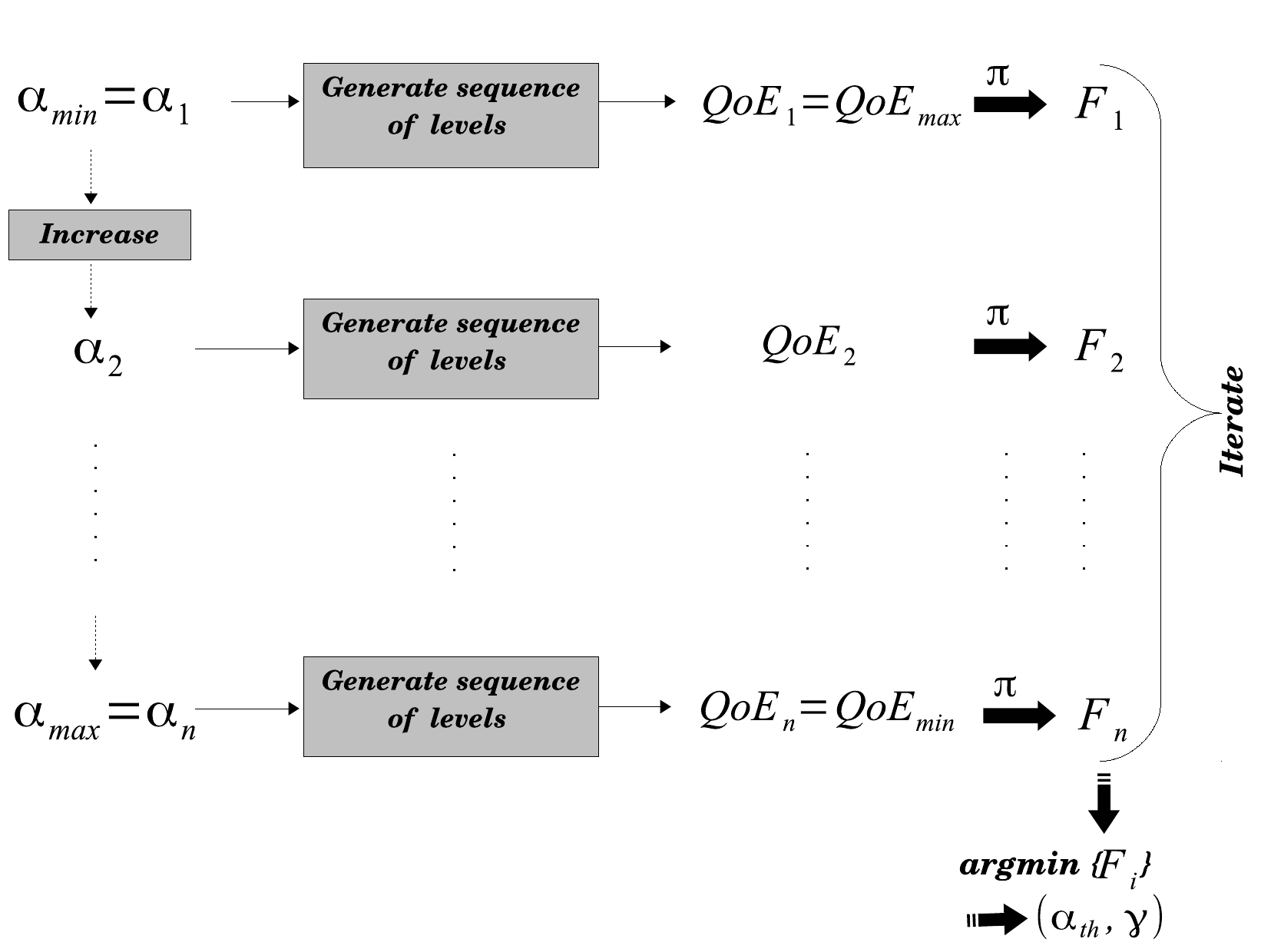}
  \vspace{-.1cm}
	\caption{Global algorithm for an optimal threshold-based solution with ascending bitrates.}
	\label{fig:GenericAlgo}
\end{center}
\vspace{-.4cm}
 \end{figure}
\subsubsection{Optimal transmission schedule $\alpha$}
To find the optimal threshold $\alpha$ with the lowest complexity, we propose to sort the future capacity in an ascending way, then try its ascendent values as thresholds till reaching the one that causes video stalls. This approach will determine all the possible thresholds $[\alpha_{min},\alpha_{max}]$. Fig.~\ref{fig:c} illustrates the example used for the simulation section. \\

\subsubsection{Optimal bitrate level strategy} Our approach for generating an optimal ascending bitrate level strategy consists of using a tree of choice of $N$ levels as depicted in Fig.~\ref{fig:tree}, where each level corresponds to a video segment.
The nodes of a tree level $i$ correspond to all possible quality levels that can be assigned to segment $i$. The parent of a node (if it exists) has either a worse or equal quality. The children (if they exist) have either a better or equal quality. We construct the tree level by level to form the path that gives the optimal sequence of bitrates. At each level, we remove the nodes whose paths cause a constraint violation in order to minimize the number of nodes at the bottom of the tree. At each level, we compute the partial weighted average quality till reaching the end of the tree. The optimal sequence of bitrates corresponds to the path that maximizes the total weighted average quality. The complexity of this algorithm may reach up to $\mathcal{O}((L+1)^N)$, which makes it non suited for online streaming services.
 \begin{figure}
 \begin{center}
 	\includegraphics [width=8cm] {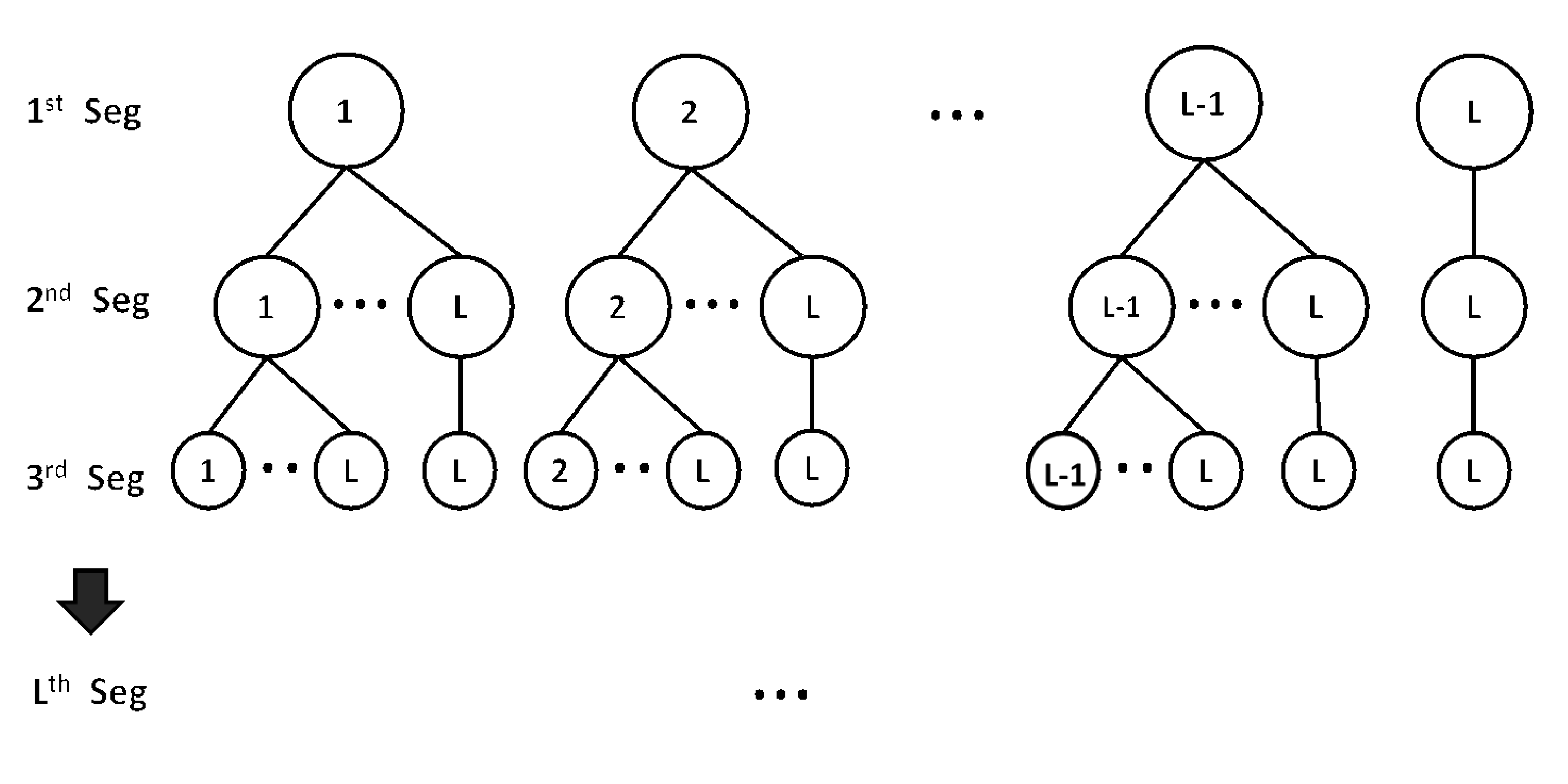}
  \vspace{-.1cm}
	\caption{Tree of choice for optimal ascending bitrate.}
	\label{fig:tree}
 \end{center}
 \vspace{-.4cm}
 \end{figure}
 \subsection{ NEWCAST design } \label{heuristic}
NEWCAST (aNticipating qoE With threshold sCheme And aScending biTrate levels) follows the same principle as the optimal global approach, but it uses two heuristics INVEST and AWARE for respectively computing the thresholds and generating the sequence of bitrates.
Let $ \gamma_\alpha$ and $\mathcal{F}_\alpha$ be the ascending bitrate levels strategy and the cost function under $r_\alpha$-based transmission schedule.
The main steps of this heuristic are described in Algorithm \ref{GeneralAlgo}. \\
 \subsubsection{INVEST: INcrease with VariablE foot STep }
This heuristic also follows the same principle as the optimal approach. However, instead of trying all the sorted capacity values as thresholds till violating the contraints, it defines a variable foot step to increase the threshold initially set to $c_{min}$. The values taken by this foot step will depend on the dynamic of the network capacity; Let $\{ \alpha_1, \ldots, \alpha_M\} \subset [\alpha_{min}, \alpha_{max}]$ such that $\alpha_{i+1} > \alpha_{i}, i\in \{1, \ldots, M-1\}$. To compute $\alpha_{i+1}$ knowing $\alpha_{i}$, we set the number of bits that we want to abandon through increasing the threshold (denoted by $Q$). Then, we find the capacity value (threshold) that allows doing that as described in Fig.~\ref{fig:demoQ}. $\alpha_{i+1}-\alpha_i$ will define the $i^{th}$ foot step (See Algorithm \ref{Footstep}).

\begin{algorithm}[t]
\caption{INVEST: INcrease with VariablE foot STep}
 \KwData {$c$, $i$, $Q$}
\SetAlgoLined\DontPrintSemicolon
\nl SortedC=sort($c$);  \;
\nl CumSortedC=CumulativeSum(SortedC); \;
\nl ind = $\max$(find (CumSortedC $\le i \times Q$)); \;
\nl \KwRet\ SortedC(ind)
 \vspace{.3cm}
\label{Footstep}
\end{algorithm}

 \begin{figure}[h]
 \begin{center}
 	\includegraphics [height=4cm,width=7cm] {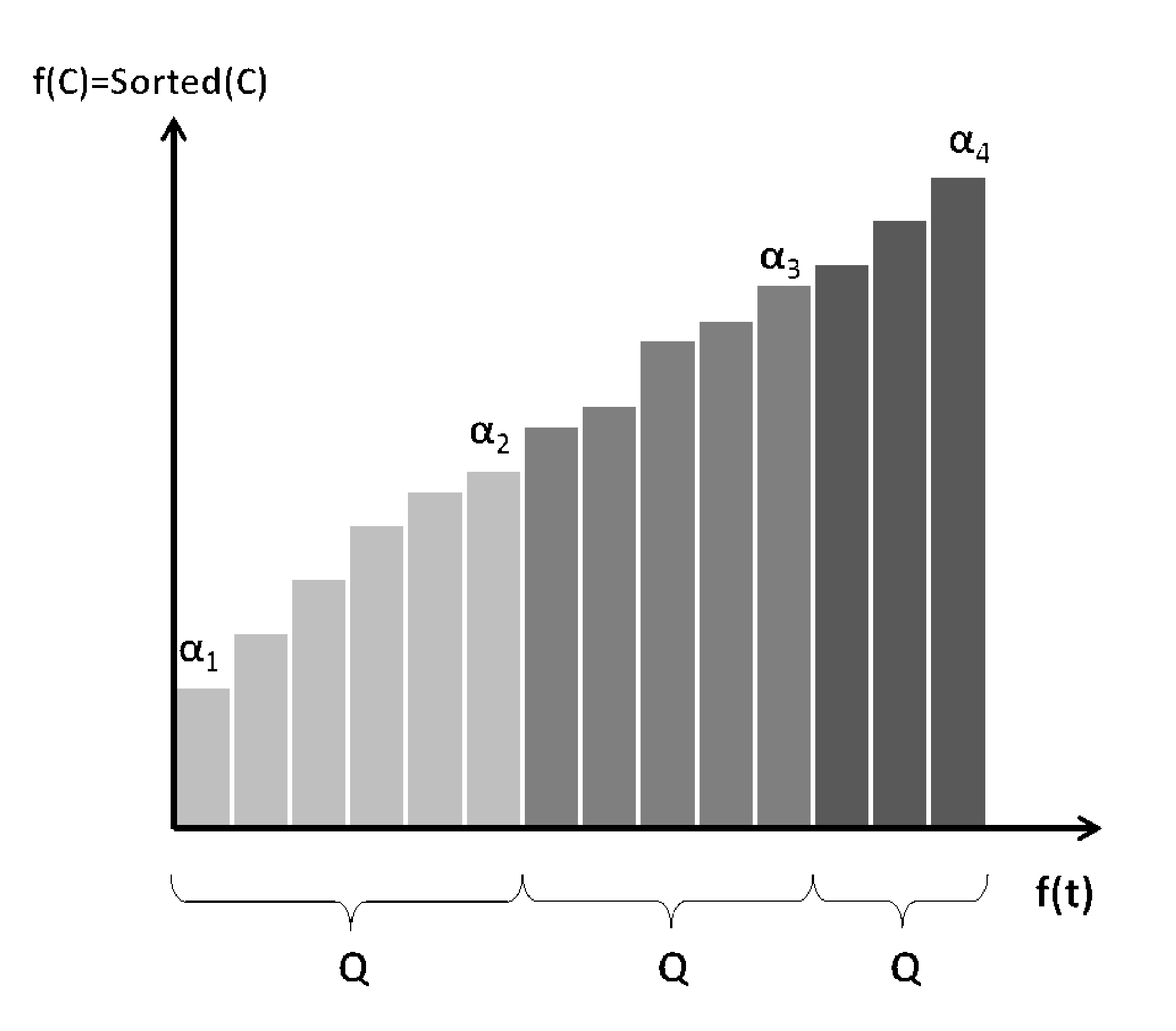}
  \vspace{0cm}
	\caption{INVEST: INcrease with VariablE foot STep.}
	\label{fig:demoQ}
\end{center}
\vspace{-.4cm}
 \end{figure}

\subsubsection{AWARE : Anticipating qoe With Ascending bitRate lEvels} This heuristic has a polynomial complexity and is quite faster than the optimal approach. Our simulation results show that its outcoming solution approaches the optimal solution at almost $98\%$ in terms of the video average quality. We summarize its steps in the few following points: \\
At the beginning, we assign the lowest bitrate to all video segments. Then, starting from the end of the video (latest segment) back to the beginning, we increase the bitrate of each segment by one level as long as the stall constraints are satisfied. We repeat this step many times till reaching the highest available bitrate (See Fig.~\ref{fig:Levels}).
By following this approach, the number of times the bitrate will be increased is at most equal to $L-1$ (see Algorithm~\ref{Heuristic}). To reduce the startup delay, which is a prominent key QoE factor (but not included in our optimization problem), we set the startup-segments to the lowest bitrate and stream them using a greedy\footnote{A greedy transmission uses all the available network capacities.} transmission rather than a threshold-based transmission. As shown in Fig.~\ref{fig:compareOptHeuristic}, an inherent advantage of this algorithm is that it ensures a progressive increase of the bitrate instead of an aggressive increase as given by the optimal approach, which is quite more appreciated by the users.

\begin{figure}[t]
 \begin{center}
  \includegraphics[width=7cm] {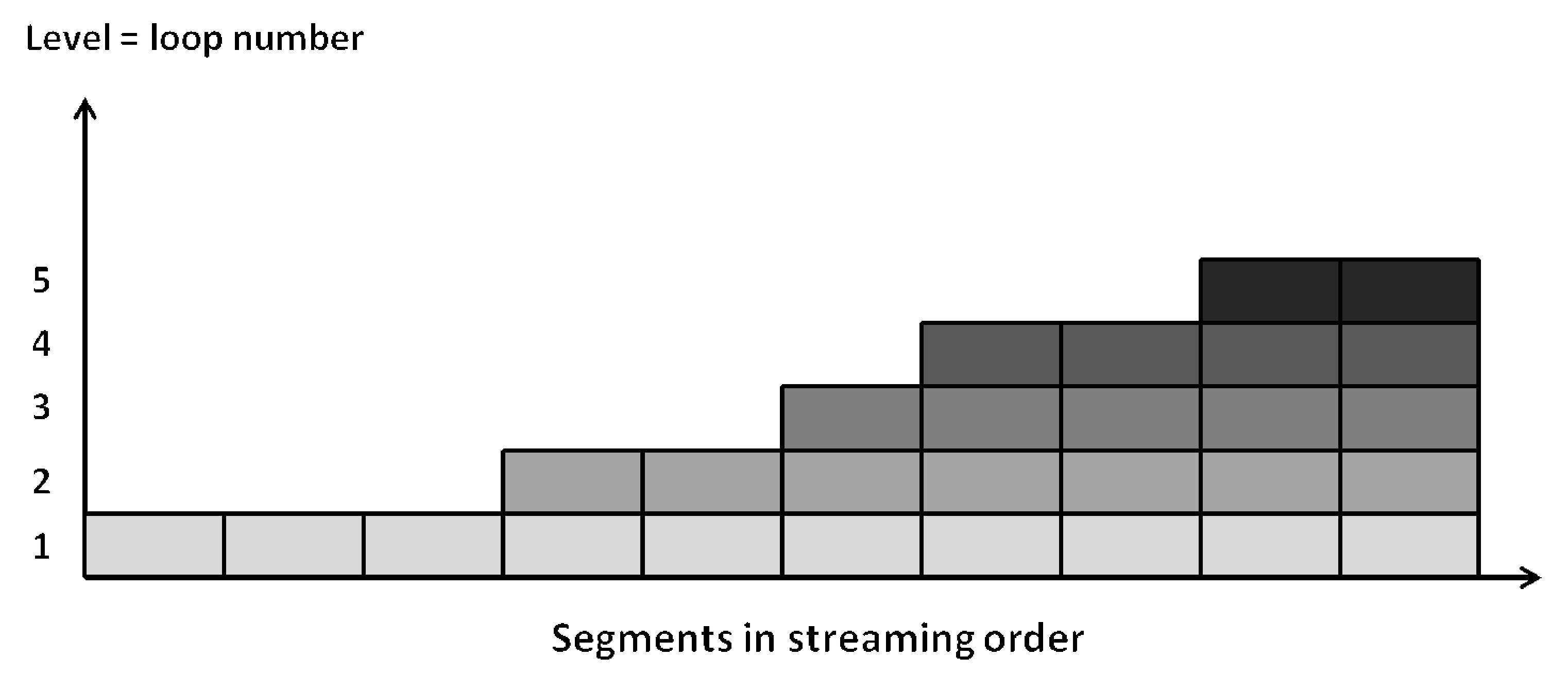}
  \caption{Illustrative example of the bitrate increasing steps used in AWARE.}
  \label{fig:Levels}
 \end{center}
 \end{figure}

 \begin{figure}[t]
 \begin{center}
 	\includegraphics[height=3cm]{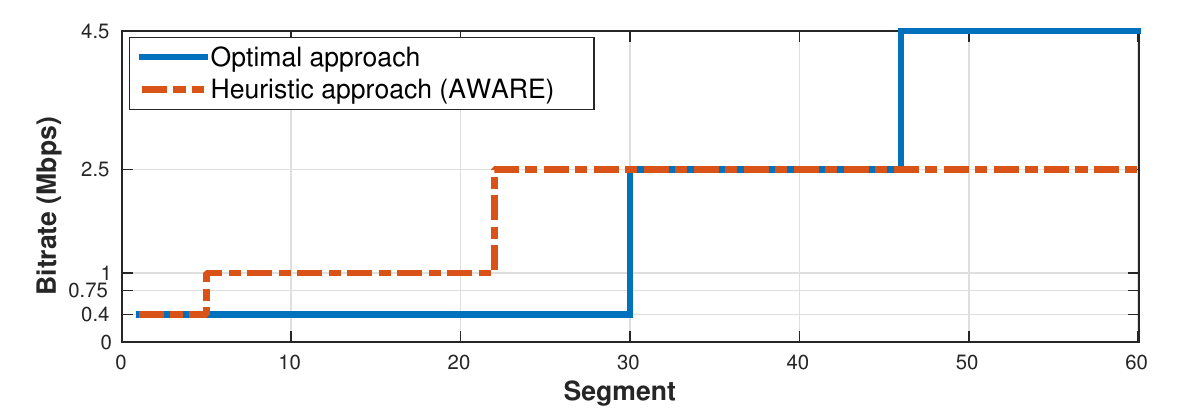}
  \vspace{-0.0cm}
	\caption{Comparative example between optimal approach and AWARE.}
	\label{fig:compareOptHeuristic}
\end{center}
\vspace{-0.5cm}
 \end{figure}

\begin{algorithm}[h]
\SetAlgoLined\DontPrintSemicolon
\KwData {$c$, VideoProperties, $L, w, Q$}
\nl $\alpha$=$c_{min}$; $i=1$; \;
\nl [PossibleTransmission, $r_\alpha$, $\gamma_{\alpha}$]=\textbf{\textbf{AWARE}}($c$, $\alpha$, videoProperties, $L$); \;
\nl \While{PossibleTransmission }{
\nl $\mathcal{F}_{\alpha}$=computeObjFunction ($c,r_{\alpha},\gamma_{\alpha},w$); \;
\nl i=i+1; \;
\nl $\alpha$ = \textbf{\textbf{INVEST}}($c, i, Q$); \;
\nl [PossibleTransmission, $r_\alpha$, $\gamma_{\alpha}$]=\textbf{\textbf{AWARE}}($c$, $\alpha$, videoProperties, $L$); \;
 }
 \nl $\mathcal{F}^{*}_{\alpha^{*}}$=min$\{ \mathcal{F}_{\alpha}\}$; \;
 \nl $\alpha_{th}$=$\alpha^{*}$; \;
\nl \KwRet\ ($\alpha_{th}$,$\gamma_{\alpha_{th}}$)
 \setcounter{AlgoLine}{0}
 \vspace{.3cm}
\caption{NEWSCAST: aNticipating qoE With threshold sCheme And aScending biTrate levels.}
\label{GeneralAlgo}
\end{algorithm}

\begin{algorithm}[h]
\SetAlgoLined\DontPrintSemicolon
\KwData {$c,\alpha$, videoProperties, $b_1 \ldots b_L$}
\nl $s=1$; SegmentsBitrates[$1:N$]=$b_s$; \;
\nl \While{$s < L$}{
\nl $s=s+1$; \;
\nl Start=FirstSegmentOfBitrate($b_{s-1}$); \;
\nl End=N; \;
\nl middle = (End-Start) div2 +1; \;
 \nl \While{middle $\ge$ 1 and End $\ge$ Start and middle $\leq$ End ) } {
 \nl init=SegmentsBitrates; \;
 \nl SegmentsBitrates[middle:End]=$b_s$ ; \;
 \nl SegmentsBitrates[1:StartupSegments]=$b_1$ ; \;
 \nl Test = ExistViolation(SegmentsBitrates,$c,\alpha$,videoProperties); \;
 \nl \eIf {Test}{
 \nl SegmentsBitrates[middle:End] = init[middle:End]; \;
 \nl middle=middle+(End-middle) div2 +1 ; \;
 }{
 \nl End=middle-1; \;
 \nl middle=Start+(End-Start) div2 +1;\;
 }
}
 }
\nl [$r_{\alpha}$,$\gamma_{\alpha}$]=TransmitVideo($c$, $\alpha$, VideoProperties, SegmentsBitrates);\;
\nl Test = ExistViolation(SegmentsBitrates, $c$, $\alpha$,VideoProperties); \;
\nl \KwRet\ ($\overline{Test}$, $r_{\alpha}$, $\gamma_{\alpha}$) {}
 \setcounter{AlgoLine}{0}
 \vspace{.5cm}
\caption{AWARE: Anticipating QoE With Ascending bitRate lEvels.}
\label{Heuristic}
\end{algorithm}

\section{Algorithmic approaches under rebuffering events}\label{algo2}
So far, we have assumed no rebuffering events during the streaming session, i.e., the future capacity has been assumed quite sufficient to allow streaming the hole video at the lowest bitrate. In extreme cases, the capacity may not be sufficient and may cause the player having video stalls even with the lowest quality level. End-user may prefer to tolerate few stalls in order to have a better quality. To go further with the analysis, we adapt our approach to a similar case where $q$ stalls can be tolerated during a session. The optimization problem in (\ref{optim}) becomes

\begin{equation}\label{optim1}
 \underset{(r,\gamma)}{{\min}} \ \mathcal{F}(r,\gamma)=
 \ \ \frac{1}{T}\int _0^T \frac{r(t)}{c(t)}dt - \pi \times \frac{\sum_{j=1}^{j=L}w_{j}\int_0^T \gamma_{j}(t)r(t) dt}{S_{L}},
 \end{equation}
\[s.t \left\{
\begin{array}{l l}
 \int_0^T\sum_{j=1}^{j=L} \frac{\lambda\ r(t) \gamma_{j}(t) }{b_{L}} =l(T), \\
{\cal{F}}_{(r,\gamma)}(T) \leq q,
 \end{array} \right. \]
where ${\cal{F}}_{(r,\gamma)}(T)$ is the number of stalls during the streaming session under strategy $(r,\gamma)$.

\begin{lemma}
\label{lem1}
Any optimal strategy will experience exactly $q$ stalls.
\end{lemma}
\begin{proof}
Assume that there exists an optimal solution $(r,\gamma)$ that has experienced $q^{\prime}$ stalls such that $q^{\prime}<q$. Suppose that under $(r,\gamma)$, $x_0$ frames have been downloaded over $[\tau,\tau+\bar{x}_0]$ where $\bar{x}_0$ is the time needed to download $x_0$ frames. By imposing an additional starvation at time $\tau$, the number of cumulative frames at playout buffer will be increased by $x_0$. This allows to give more opportunity for the transmission schedule to reduce the cost of transmission without changing their corresponding video bitrate $\gamma$ and without violating the stall constraints. Thus, the strategy $(r,\gamma)$ may decrease the cost function ${\cal{F}}$ by forcing an additional stall, which completes the proof.
 \end{proof}
The following result extends the proprieties of the optimal solution by including the possibility of rebuffering.
By Lemma \ref{lem1}, the following corollary holds.
\begin{corol}
\label{corol1}
Assume that there exists a feasible solution that satisfies the constraints in (\ref{optim1}), then there exists an optimal strategy $(r_{th} , \gamma_{th})$ of optimization problem (\ref{optim1}), where $r_{th}$ is a threshold transmission schedule and $\gamma_{th}$ is a threshold-based ascending bitrate level solution.
\end{corol}
With the above results, the algorithmic approaches under no rebuffering events still hold for the general case where the number of video playback stalls can be tolerated.

In Algorithm \ref{alg:Heuristic2}, we present the modified NEWCAST algorithm where we allow video playback stalls to happen. The major modification concerns only AWARE algorithm to compute the optimal bitrate level strategy since INVEST algorithm remains unchanged under rebuffering events.

 \begin{algorithm}[h]
\SetAlgoLined\DontPrintSemicolon
 \KwData {$c$, $\alpha$, videoProperties, $b_1 \ldots b_L$, maxStalls=$q$}
\nl $s=1$; SegmentsBitrates[1:N]=$b_s$; \;
\nl \While{s $<$ L}{
\nl $s=s+1$; \;
\nl Start=FirstSegmentOfBitrate($b_{s-1}$); \;
\nl End=N; \;
\nl middle = (End-Start) div2 +1; \;
 \nl \While{middle $\ge$ 1 and End $\ge$ Start and middle $\leq$ End ) } {
 \nl init=SegmentsBitrates; \;
 \nl SegmentsBitrates[middle:End]=$b_s$ ; \;
 \nl SegmentsBitrates[1:StartupSegments]=$b_1$ ; \;
 \nl nbrStalls = ComputeViolations(SegmentsBitrates, $c$, $\alpha$, videoProperties); \;
 \nl \eIf {nbrStalls $>$ maxStalls}{
 \nl SegmentsBitrates[middle:End] = init[middle:End]; \;
 \nl middle=middle+(End-middle) div2 +1 ; \;
 }{
 \nl End=middle-1; \;
 \nl middle=Start+(End-Start) div2 +1;\;
 }
}
 }
\nl [$r_{\alpha}$,$\gamma_{\alpha}$]=TransmitVideo($c$ , $\alpha$, VideoProperties, SegmentsBitrates);\;
\nl nbrStalls = ComputeViolations(SegmentsBitrates, $c$, $\alpha$, VideoProperties); \;
\nl Test= nbrStalls $\le$ maxStalls; \;
\nl \KwRet\ (Test, $r_{\alpha}$, $\gamma_{\alpha}$) {}
 \setcounter{AlgoLine}{0}
 \vspace{.5cm}
\caption{AWARE-MS$_q$: AWARE with at Maximum $q$ Stalls.}
\label{alg:Heuristic2}
\end{algorithm}

\section{Simulations and numerical results} \label{sec:results}
\subsection{Simulation tools and setup}
We performed all our simulations using Matlab server R2015b on a Dell PowerEdge T420 Intel Xeon running Ubuntu 14.04. The streaming session was configured according to some DASH and Youtube parameters \cite{YoutubeResolutions,Lederer:2012:DAS:2155555.2155570} and the network capacity was randomly generated around a constant average throughput.
To the best of our knowledge, no explicit way does really exist to compute the weights that can be accorded to the video bitrates. In \cite{DBLP:conf/icc/ShenLLY14}, authors were exploring a QoE estimation model in which they were assigning to each video segment a QoE metric with a logarithmic variation as function of the bitrate and the motion factor. In \cite{conf/icc/EssailiSSSKS13}, however, authors used a MOS (Mean Opinion Score) factor in order to reflect the user's satisfaction toward each quality level.
In this paper, we assign the weights to the bitrates in a proportional way as follows
$$w_i =\frac{b_i}{\sum \limits_{{i=1}}^L b_i},$$ where $b_i$ is the $i^{th}$ bitrate level and $w_i$ is its corresponding weight. All the parameters are listed in Table \ref{parameters}.
For the sake of accuracy, we explore the values of the threshold $\alpha$ using the optimal approach. Our heuristic (INVEST) will be discussed later in Section~\ref{complexity}.

\begin{table}
\begin{center}
\begin{tabular}{|c|c|}
\hline
\textbf{Window Size} & 3 min 10 s \\
\hline
\textbf{Average throughput} & 2 Mbps \\
\hline
\textbf{Capacity Time Slot} & 1 s \\
\hline
\textbf{Video Length} & 3 min \\
\hline
\textbf{Segment Length} & 1s \\
\hline
\textbf{Video frame rate} & 30 fps \\
\hline
\textbf{Playback cache} & 4s \\
\hline
\textbf{Video bitrates (Mbps)} & [0.4 0.75 1 2.5 4.5] \\
\hline
\textbf{Levels weights} & [0.09 0.17 0.22 0.55 1] \\
\hline
\end{tabular}
	\end{center}
	\caption{Parameters of Matlab simulations.}
	\label{parameters}
\vspace{-0.5cm}
\end{table}

\subsection{Framework performance}\label{results}
Fig. \ref{fig:c} outlines the dynamic of the capacity we used for all the simulation section and its correspondent values of the threshold $\alpha$. Note that, when $\alpha$ exceeds its maximum value, a stall constraint will be violated. By the sequel, we define our benchmark as the case where all the future capacity is used and the highest possible video quality is delivered, i.e., $\alpha=c_{min}$.
The execution of NEWCAST using the above parameters showed us a variation in the system performance for $\pi$ ranging from $4.50$ to $4.70$. Beyond the limits of this interval, the system performance remained constant.
In the following analysis, we will only focus on three values of $\pi$: \textit{low}, \textit{medium} and \textit{high}.
Denote by $\alpha_{\pi}$ the outcoming threshold after running NEWCAST using the preference parameter $\pi$.

In Fig.~\ref{fig:3curves} we show the variation of $\alpha_{\pi}$ as function of $\pi$; a small value of $\pi$ results in a high $\alpha_{\pi}$ as it prioritizes the system utilization cost. A big value of $\pi$, however, results in a low threshold as it accords more importance to the average quality. Whereas a medium $\pi$ leads to an in-between threshold that balances QoE and system cost.

 \begin{figure}[ht]
 \begin{center}
 	\includegraphics [height=5cm,width=9cm] {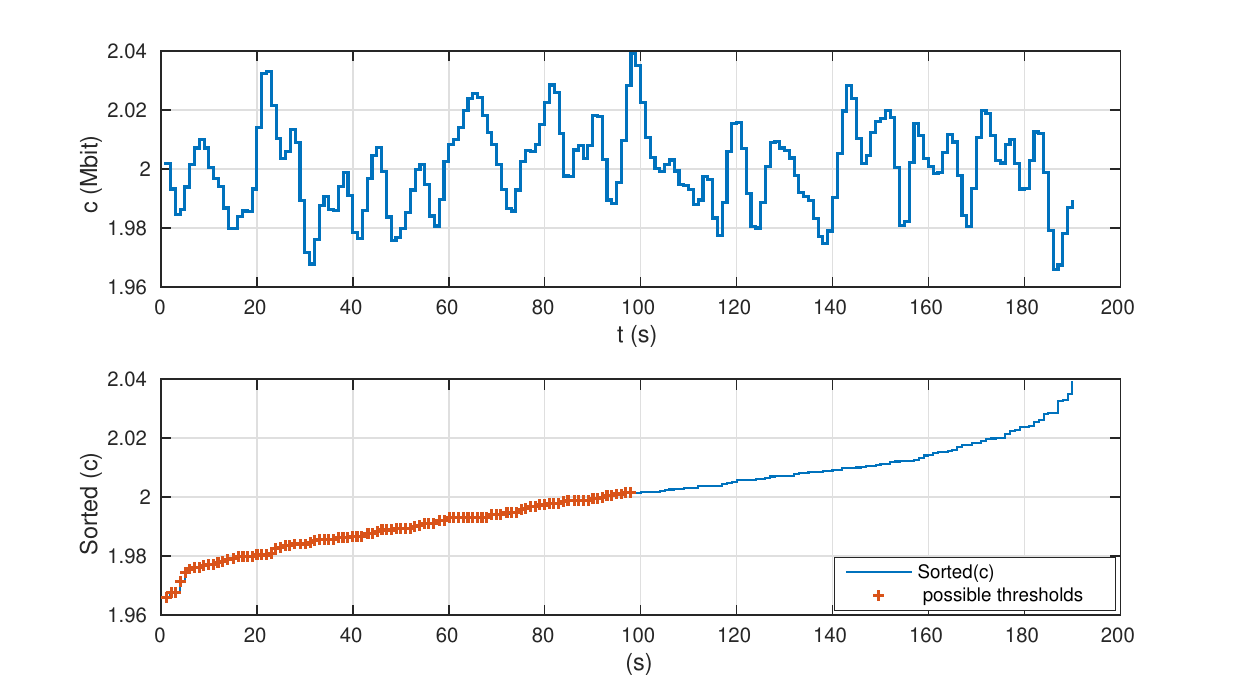}
 \vspace{-.7cm}
	\caption{Network capacity and threshold $\alpha$.}
	\label{fig:c}
\end{center}
 \end{figure}

 \begin{figure}[ht]
 \begin{center}
 	\includegraphics[width=9.5cm, height=4cm ] {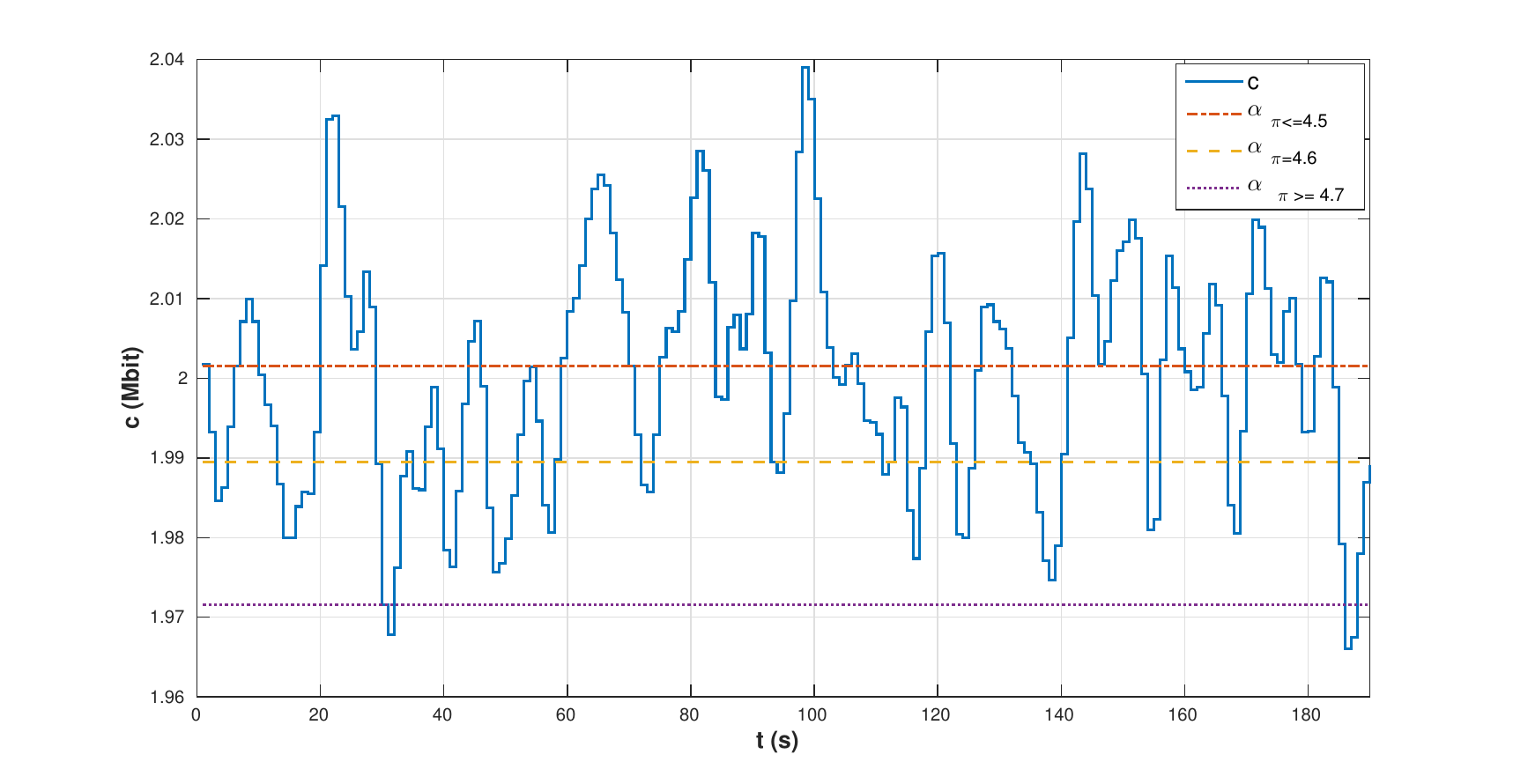}
	\vspace{-.6cm}
	\caption{Variation of $\alpha_{\pi}$ as function of $\pi$.}
	\label{fig:3curves}
\end{center}
\end{figure}

\begin{figure}[ht]
 \begin{center}
 	\includegraphics [width=9cm] {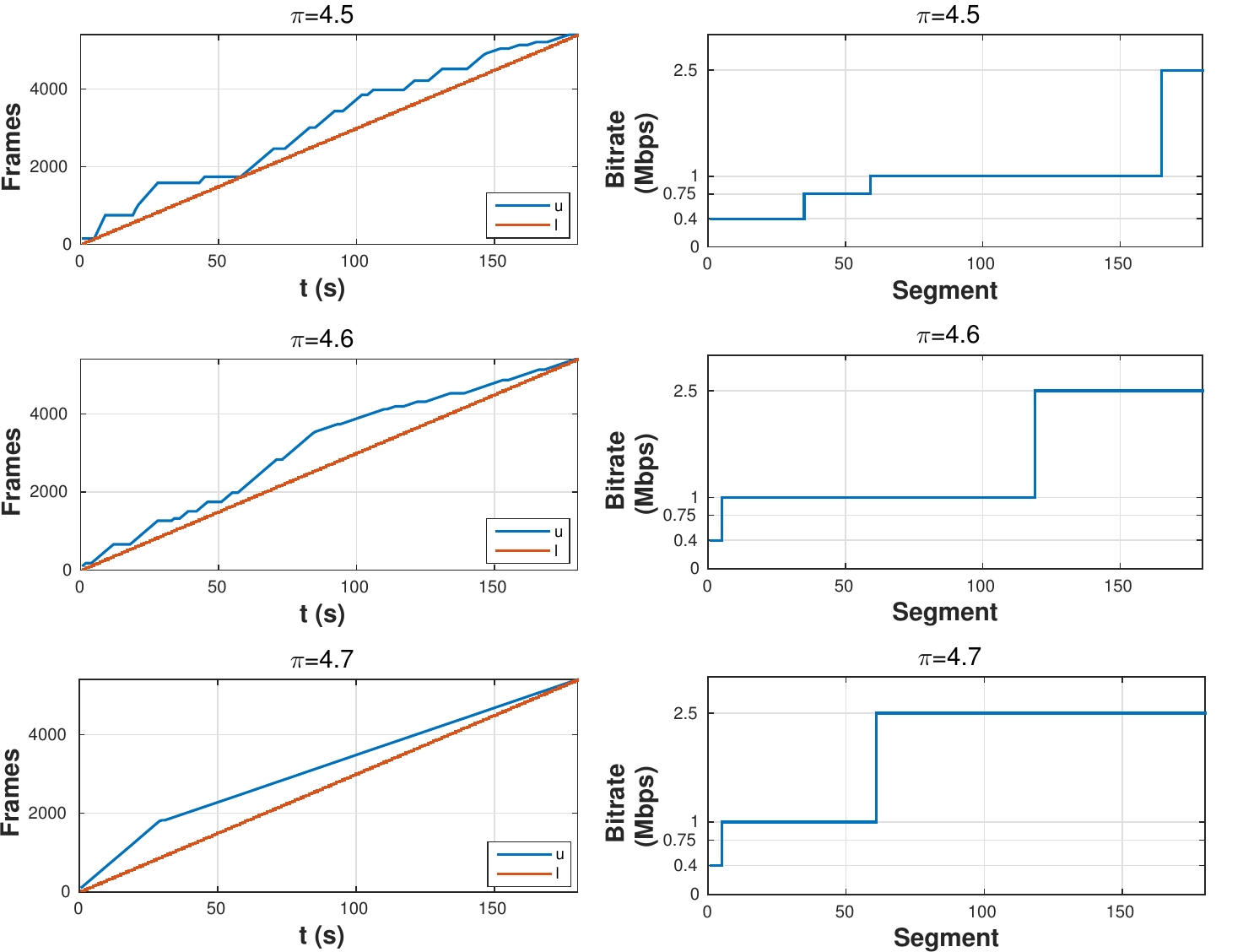}
	\vspace{-.5cm}
	\caption{Playback buffer state evolution and corresponding sequence bitrates for different $\pi$.}
	\label{fig:constraints}
\end{center}
\vspace{-.5cm}
\end{figure}

\begin{figure*}[t]
\begin{center}
 	\includegraphics [width=17cm] {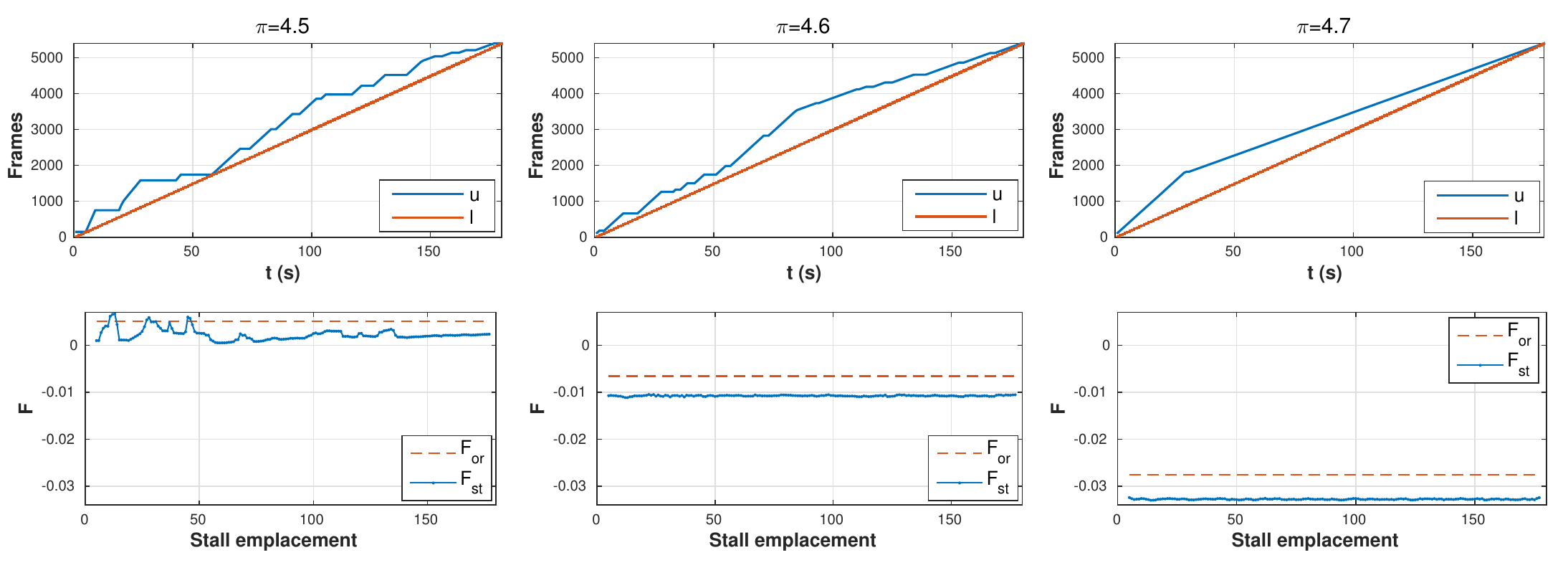}
	\vspace{-.2cm}
 \caption{System performance with buffer stall enforcement.}
	\label{fig:starvation}
\end{center}
\vspace{-.5cm}
 \end{figure*}

In Fig.~\ref{fig:constraints} we plot the playback buffer state evolution over time and its correspondent sequence of bitrates for the three aforementioned values of $\pi$. When $\pi$ is small, many silent times are noticed and the buffer state evolves with high slopes (mainly at the beginning and at the middle of the video). This is actually due to the low quality of the segments being streamed. Note that the player streams as much frames as the bitrate is low.
For the medium value of $\pi$, more flexibility is noticed with shorter silent times and better quality.
As for the big value of $\pi$, no silent times are noticed since almost all the network resources are used. The buffer state evolves gradually with low slopes, given the fact that segments of high-order quality are being streamed.

Now, we explore the idea of enforcing a stall during the streaming session. Let $\mathcal{F}_{or}$ be the original cost function before enforcing a stall, and $\mathcal{F}_{st}$ be the resulting cost function after enforcing a stall.
In Fig.~\ref{fig:starvation}, we plot again the playback buffer state evolution over time for the three values of $\pi$, and plot below the variation of $\mathcal{F}_{st}$ as function of the stall emplacement ($1^{st}$ segment, $2^{nd}$ segment, etc.). As depicted in the figure, for $\pi=4.5$, $\mathcal{F}_{st}$ experiences high fluctuations around $\mathcal{F}_{or}$ mainly when the stalls are enforced at the beginning of the video. The lowest values of $\mathcal{F}_{st}$ are noticed when the stalls are enforced at the moments where the original buffer state is critical, i.e., a low quality with no much flexibility toward the stall constraint. Note that, the critical states of the buffer at these moments prevent NEWCAST from setting a higher threshold. When a stall is enforced there, the video is divided into two independent parts and the streaming strategy is optimized before and after the stall, leading to two different thresholds that reduce the overall system utilization cost.
Now, by increasing $\pi$, we observe a quasi-constant decrease in $\mathcal{F}_{st}$. A stall enforcement certainly enhances the quality at the beginning part of the video, but it condemns the flexibility and the average quality for the rest of the video. The degradation in the global quality induces a reduction in the global system cost that outweighs the resulting $\mathcal{F}_{st}$. To sum it up, a stall enforcement may be only interesting when the value of $\pi$ is low since it may reduce the system cost. A judicious choice of its emplacement would be at the moments where the original buffer state is critical.

\subsection{Robustness under prediction errors}
One key limitation of the proposed idea is that there is still no explicit approach that accurately predicts the network capacity over more than ten seconds to the future. In order to evaluate the robustness of NEWCAST, we used the real throughput traces of the HSDPA dataset \cite{dataset}. This dataset consists of $30$ minutes of continuous throughput measurements of a moving device in Telenor's 3G/HSDPA wireless mobile network.
We used the traces of the Ljabru-Jernbanetorget trajectory as it has the least variance in the throughput spatial variation (see Fig.~\ref{capDistance} ). A temporal mapping of the throughput variation was performed by supposing the user moving at a speed of $50$Kmph. Using the same parameters of Table \ref{parameters}, we computed the performance $\mathcal{P}_{av}$ of NEWCAST by averaging all the throughput realizations, then, we computed its performance $\mathcal{P}_{real}$ by using each throughput realization apart.
The robustness of the framework was evaluated through the performance averaged error rate
$$\mathcal{P}_{error} = \left|\frac{\mathcal{P}_{real} - \mathcal{P}_{av}}{\mathcal{P}_{av}} \right|.$$
Results shown by Fig.~\ref{fig:Robustesse} depict an averaged error rate less than $15\%$ for both the system cost and the average quality. They even depict a lower sensitivity of the system cost to prediction errors when $\pi$ is smaller, and a lower sensitivity of the average quality to prediction errors when $\pi$ is higher. In general, we can claim that our scheme performs pretty well even with the presence of real prediction errors.

\begin{figure}[t]
\begin{center}
 \includegraphics[width=9.5cm,height=4.5cm]{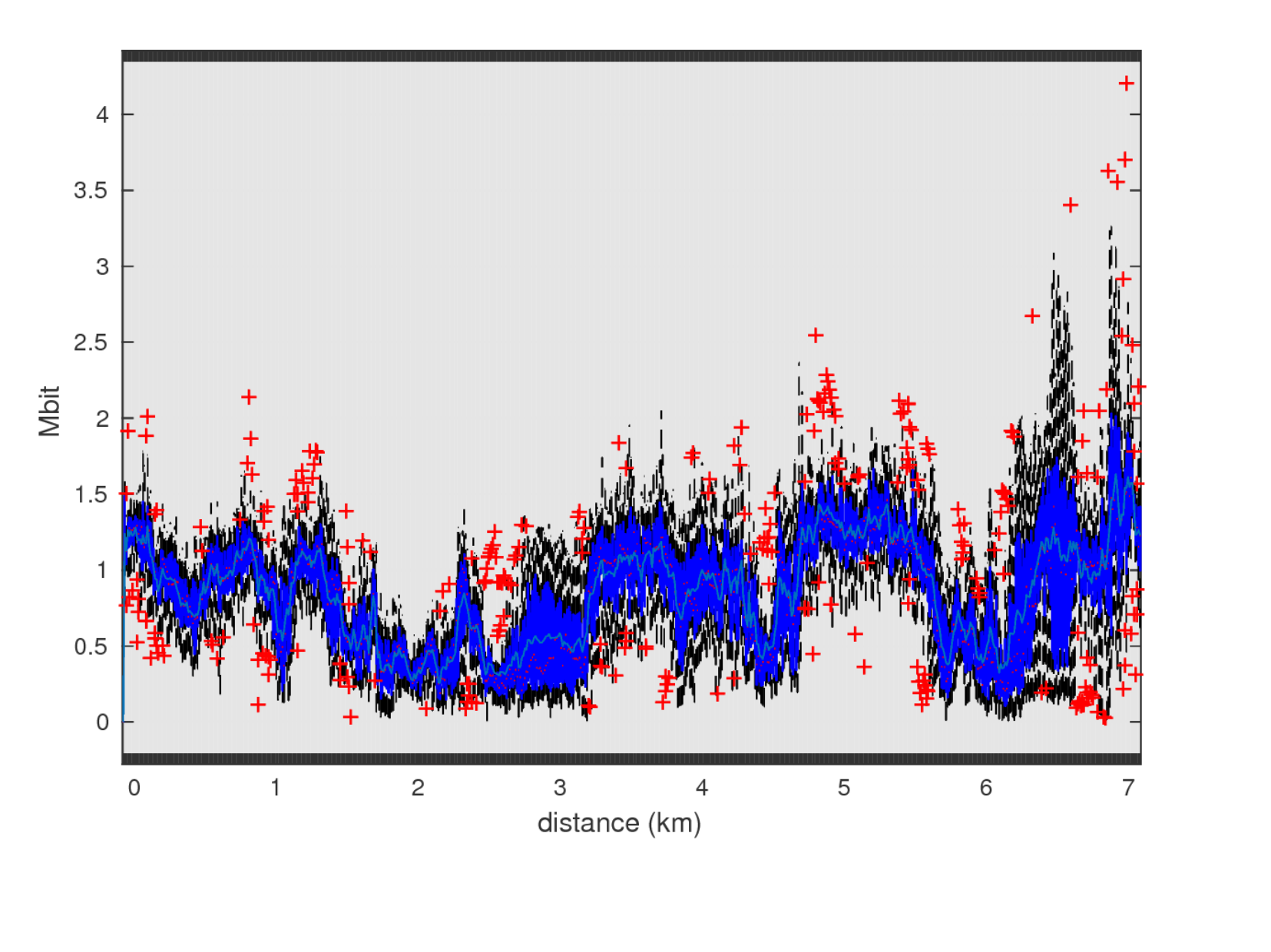}
 \vspace{-.7cm}
 \caption{Experimental spatial variations of the capacity for the tramway Ljabru-Jernbanetorget trajectory.}
 \label{capDistance}
\end{center}
\end{figure}

 \begin{figure}[t!]
 \begin{center}
 	\includegraphics[width=9cm] {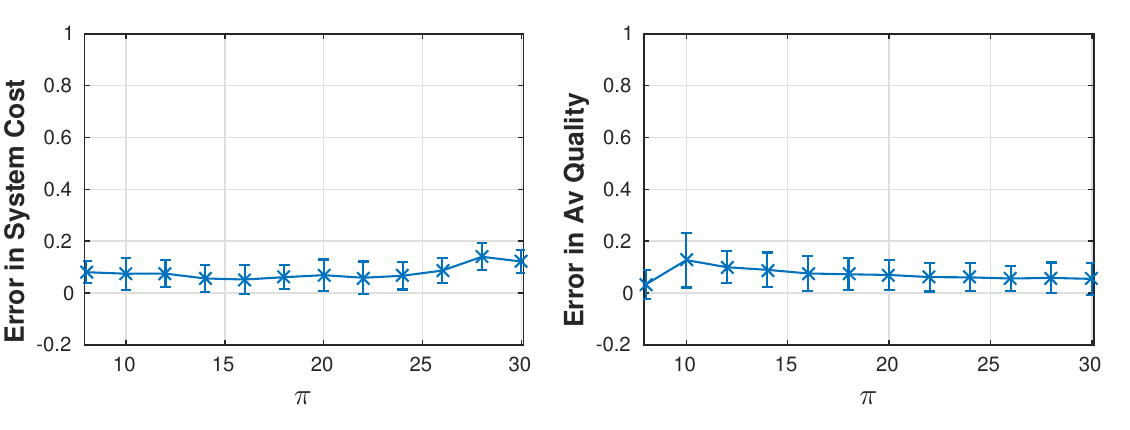}
	\vspace{-.7cm}
	\caption{Average error rate on the system performance under real throughput prediction errors.}
	\label{fig:Robustesse}
\end{center}
\end{figure}

\begin{figure}[t!]
 \begin{center}
 	\includegraphics[width=9cm]{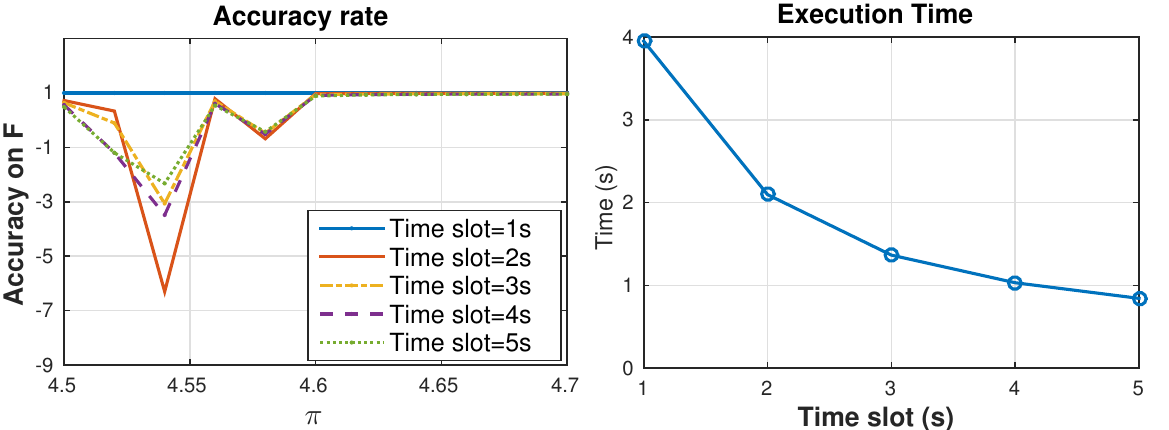}
	\vspace{-.5cm}
	\caption{Accuracy and complexity variations with different time slots.}
	\label{fig:AccuracyInterval}
 \end{center}
 \vspace{-.4cm}
 \end{figure}

 \begin{figure}[t!]
 \begin{center}
 	\includegraphics[width=9cm]{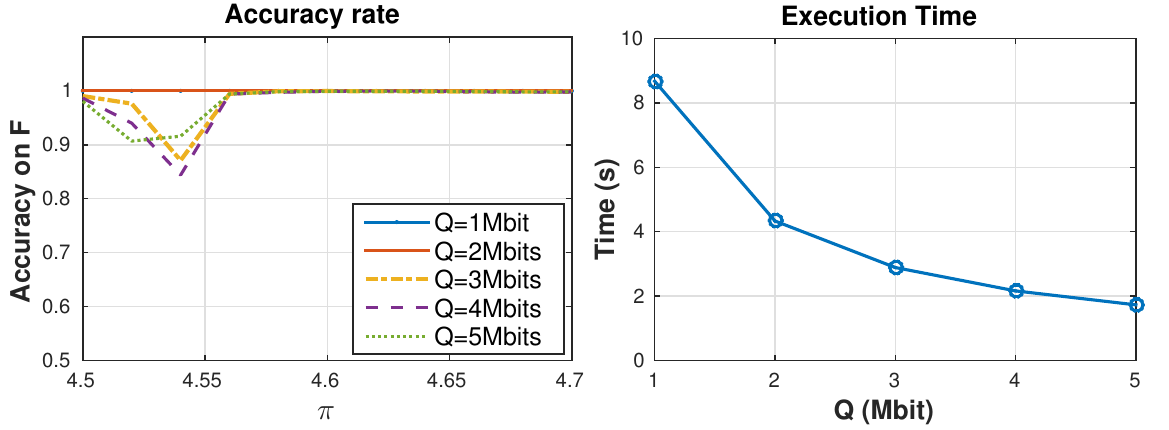}
	\vspace{-.5cm}
	\caption{Accuracy and complexity variations with different $Q$.}
	\label{fig:AccuracyQ}
 \end{center}
 \vspace{-.4cm}
 \end{figure}

\subsection{Complexity}
\label{complexity}
\subsubsection{Framework performance under bigger time slots.}

In Fig.~\ref{fig:AccuracyInterval}, we compute the mean execution time of NEWCAST (using optimal thresholds) by averaging results on $100$ (randomly generated) capacities and using different time slots (from $1$s to $5$s). It takes almost $4$s to compute the final strategy with a time slot equal to $1$s. As expected, using bigger time slots takes much shorter time. However, this comes at the expand of the final result accuracy depending on the value of $\pi$. In the same figure, we show the system response (through $\mathcal{F}$) for each time slot by averaging results over the $100$ capacities. We compute an accuracy rate factor ($\le 1$) by comparing the obtained results with the result of $1$s time slot. In our model, we assume that in a time slot only one bitrate level can be streamed, which explains why using bigger time slots may add constraints to the QoE. For high values of $\pi$, very slight degradation is noticed since the system tends to use all the network resources. However, for low values of $\pi$, the constraints have bigger impact since the system tends to use less network resources, which explains the higher degradation in the QoE, and, by the sequel, the higher reduction in the system cost. 

\subsubsection{Framework performance under different values of $Q$}

Here, we set the time slot to $1$s and run NEWCAST using different values of $Q$ (between $1$Mbit and $5$Mbits) by averaging results on the same $100$ capacities. Results in Fig.~\ref{fig:AccuracyQ} show that setting $Q$ to the average throughput ($2$Mbps) leads to a high accuracy rate ($\approx 1$) with an execution time of $4$s (as for optimal thresholds). Setting lower values of $Q$, increases the execution time and keeps almost the same accuracy on $\mathcal{F}$. For higher $Q$, the complexity is notably reduced, but slight degradations are noticed on the accuracy rate (less than $16\%$).
A judicious choice of $Q$ should then be made depending on the operator's preferences: a high $Q$ gives a high QoE and a very low complexity, whereas, a low $Q$ gives a low system cost and a higher complexity.

\subsection{Comparison with baseline adaptive bitrate (ABR) algorithms}
In this section, we compare NEWCAST to two baseline ABR algorithms: one is throughput-based (TB-ABR), the other is buffer-based (BB-ABR). We develop each algorithm on Matlab and simulate its behaviour on different video streaming sessions. We keep all the parameters setting of Tab \ref{parameters}.
\subsubsection{Overview on ABR algorithms}
The key difference between current ABR algorithms is the logic they use for bitrate selection. As found in the literature, ABR logics can be categorized in two main classes: throughput-based class \cite{Bouaziz, ACM12Lederer,Jiang:2012:IFE:2413176.2413189} and buffer-based class \cite{Huang14,DBLP:journals/jsac/ThangLPR14}. While the first class relies only on the next throughput prediction to decide on the current bitrate selection, the second class relies only on the current playback buffer occupancy. Few algorithms, however, were proposed as a mixture of throughput-based and buffer-based algorithms \cite{Yin:2015:CAD:2829988.2787486}.

 {\bf a. Throughput-based algorithms:} The commonly adopted rule in throughput-based algorithms is to never chose a bitrate larger than the estimated throughput unless it is the lowest bitrate. This is mainly to avoid eventual buffer underflows in the future. The key difference between these algorithms is the way to estimate the throughput and the way to select the bitrate.
According to \cite{ACM11Akhshabi}, throughput estimate is either \textit{instant} or \textit{smoothed}; The \textit{instant} throughput of a segment $i+1$ is the throughput measured during the download of segment $i$, whereas the \textit{smoothed} throughput is the weighted sum of the $n$ previous throughputs measured during the download of the latest $n$ segments, $n>1$ \cite{ICC15Parikshit}. The main drawback of using the \textit{instant} throughput is that it makes the bitrate selection react quickly to sudden throughput variations, which may annoy the user.
As for the bitrate selection, it can be either \textit{aggressive} or \textit{conservative} \cite{ACM12Huang}; With the \textit{aggressive} method, the bitrate can jump from one level to another without caring about the jump size, whereas the \textit{conservative} method tends to increase/decrease it progressively to not bother the user's perception. \\
{\bf b. Buffer-based algorithms:} The buffer-based method adapts the bitrate only based on the buffer occupancy. The buffer is in general divided into different ranges, and, depending on the range of the buffer level, multiple actions can be applied. In \cite{DBLP:journals/jsac/ThangLPR14}, a comparison between some existent buffer-based algorithms was done, according to authors, the \textit{most stable} one was addressed in \cite{ACM12Lederer} since the criterion to maintain the bitrate depends only on the range of the buffer level.

\subsubsection{TB-ABR and BB-ABR configuration: criteria of choice for comparison with NEWCAST}

The main characteristic of NEWCAST is that it increases the quality of segments {\textit{progressively}} to avoid bothering the user with sudden quality jumping. For this reason, we configure the TB-ABR and the BB-ABR algorithms to be both {\textit{conservative}}.
For TB-ABR we use \textit{the smoothed throughput} estimation such that

\begin{equation}\label{ThroughputEstim}
\hat{T}(i+1)=\sum_{k=i-3}^{i} p_k T(k),
\end{equation}
with $p_1=0.5; p_2=0.3; p_2=0.15; p_2=0.05$. $T(i)$ designs the throughput measured after downloading segment $i$, and $\hat{T}(i+1)$ is the throughput estimate of segment $i+1$. As for the bitrate selection, we use a method close to that defined in "Microsoft Smooth Streaming" (see Algorithm \ref{alg:TBA}).
For BB-ABR, we use the \textit{ most stable} method used in \cite{DBLP:journals/jsac/ThangLPR14}. We define three thresholds $B_{low} , B_{min}, $ and $B_{high}$ (respectively equal to $4$, $8$ and $12$ segments), and define multiple strategies of bitrate adaptation depending on the range of the buffer level (see Algorithm \ref{alg:BBA}).

 \begin{algorithm}
 \caption{TB-ABR : Throughput-Based ABR}
 \label{alg:TBA}
\SetAlgoLined\DontPrintSemicolon
\nl \For {segments of startup phase}{
\nl set the quality to the lowest bitrate $b_1$
}
\nl \For {segments of post-startup phase}{
\nl estimate the throughput based on the three previous downloaded segments \;
\nl \eIf {the throughput $\le b_1$}{
set the quality to $b_1$
}{
 \nl \eIf {the throughput $\le$ the previous bitrate }{
 \nl set the quality to the highest bitrate below the throughput \;
 }{
 \nl \eIf {the next higher bitrate $\le$ the throughput}{
 increase the bitrate by one level
 }{
 keep the same quality}
 }
}
}
 \vspace{.5cm}
\end{algorithm}

 \begin{algorithm}
 \caption{BB-ABR : Buffer-Based ABR}
 \label{alg:BBA}
\SetAlgoLined\DontPrintSemicolon
\nl \For {segments of startup phase}{
\nl set the quality to the lowest bitrate $b_1$
}
\nl \For {segments of post-startup phase}{
\nl \If { $0 \le$ BufferState $\le B_{min}$}{
\nl set the quality to the lowest bitrate $b_1$
}
\nl \If { $B_{min} <$ BufferState $\le B_{low}$}{
\nl \eIf {the BufferState is increasing }{
\nl keep the same quality
}{
\nl decrease the bitrate by one level if possible, or keep it the same
}
}

\nl \If {$B_{low}<$ BufferState $\le B_{high} $}{
\nl keep the same quality}

\nl \If {$B_{high} >$ BufferState}{
\nl increase the bitrate by one level if possible or keep it the same}
}
 \vspace{.5cm}
\end{algorithm}

\subsubsection{Capacities of test}
To be as close as possible to real world throughput variations, we generate the capacities of test by the mean of the standard-complaint Ns3 simulator. We conducted extensive simulations of an LTE-network by varying the mobility or/and the number of users each time. All the throughput samples resulting from these simulations were used for the evaluation of both NEWCAST and the ABR algorithms. 


\subsubsection{Main comparison points}

 \begin{figure}[h]
 \begin{center}
 	\includegraphics [width=9.5cm] {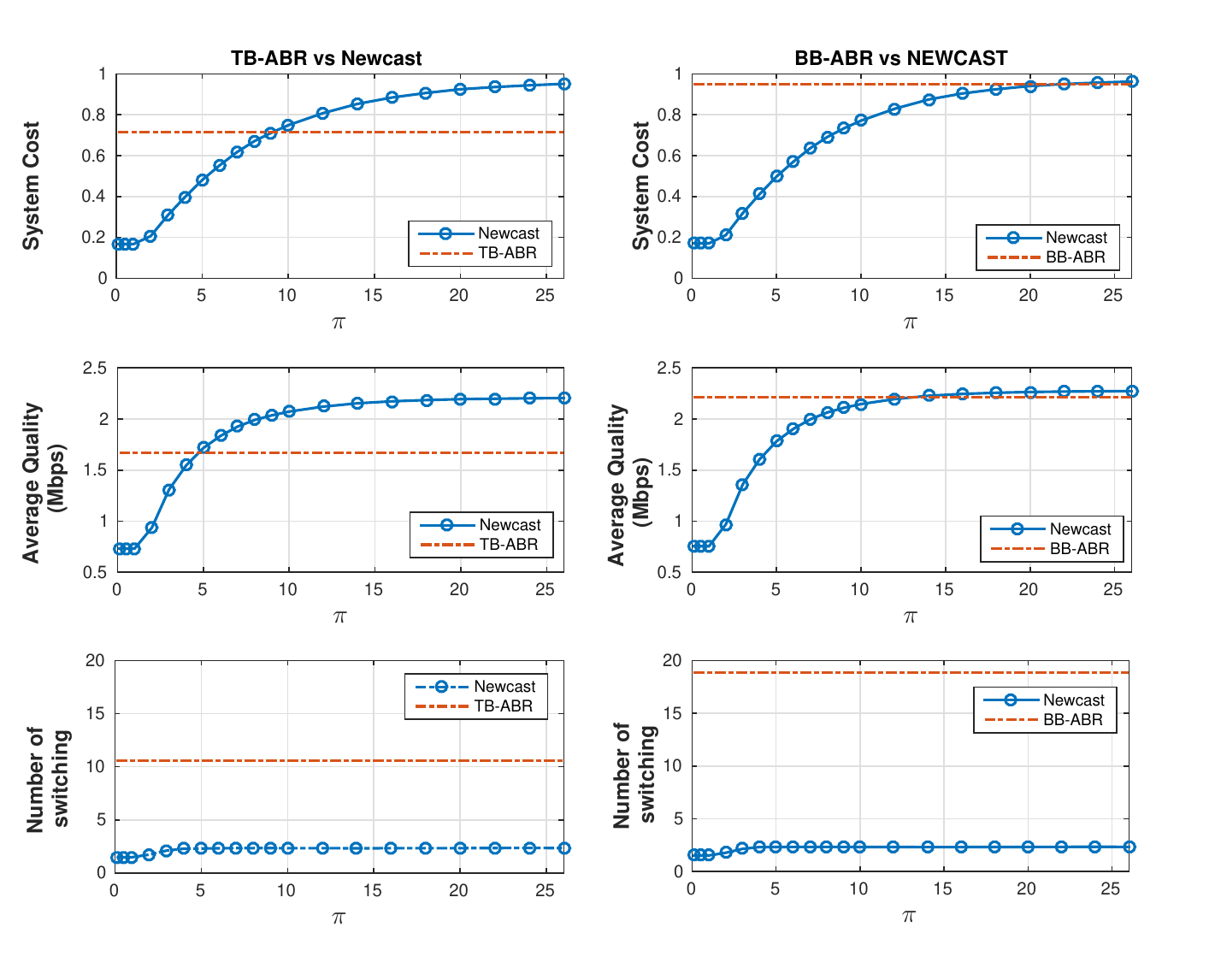}
 \vspace{-.7cm}
	\caption{TB-ABR vs. NEWCAST and BB-ABR vs. NEWCAST (without stalls).}
	\label{fig:DASHvsNewcastNs3}
	\end{center}
 \end{figure}

After having run NEWCAST and the two aforementioned ABR algorithms using all the throughput samples, we noticed that the TB-ABR algorithm was encountering video stalls (at least one stall) in $0.3\%$ of cases, whereas the BB-ABR algorithm was encountering video stalls in $7.8\%$ of cases. NEWCAST, however, succeeded at achieving zero stall during all the streaming sessions. Hence, we found it more judicious to perform the comparison by distinguishing the cases where the number of stalls encountered by each ABR algorithm was also equal to zero. Our analysis is driven by the three metrics that mostly characterize NEWCAST: the system cost, the average per segment video quality, and the average number of quality switching. In Fig.\ref{fig:DASHvsNewcastNs3}, we plot each of these metrics as function of $\pi$; $\pi$ ranging from $0.2$ to $26$.

{\bf a. TB-ABR vs. NEWCAST}: According to Fig.~\ref{fig:DASHvsNewcastNs3}, the main advantage of NEWCAST is that it can achieve the same quality as TB-ABR with a system utilization cost reduced by at least $30\%$, and that it can achieve the same system cost with an average quality enhanced by up to $19\%$. This is mainly due to the smart threshold-based-strategy of NEWCAST that uses the less expensive resources depending on the value of $\pi$. It is then up to the operator to make the tradeoff and to wisely calibrate the value of $\pi$ to outperform the TB-ABR algorithm. A further important observation lies in the very reduced number of quality switching achieved by NEWCAST (at most $2.5$) compared to that achieved by TB-ABR (around $11$). \\
 {\bf b. BB-ABR vs. NEWCAST}: We notice from Fig.~\ref{fig:DASHvsNewcastNs3} that BB-ABR is very greedy toward the resource usage compared to TB-ABR, which makes it give near performance to NEWCAST when applied with high values of $\pi$. Actually, for some values of $\pi$, NEWCAST outperforms BB-ABR, but this outperformance is marginal. In fact, the same average quality can be achieved with a system cost reduced by $12\%$, and the same system cost can be achieved resulting in an average quality increased by only $4\%$. The greedy character of BB-ABR can be either emphasised or de-emphasised depending on the thresholds set for the playback buffer ($B_{min}$, $B_{low}$ and $B_{high}$), so it may happen that BB-ABR uses all the resources and gives a higher average quality than NEWCAST, but this outperformance will not exceed $2\%$ since the heuristic used by NEWCAST approximates the optimal quality arrangement by $98\%$. All things considered, the most worth citing advantage of NEWCAST, is that it gives a far less number of quality switching (at most $2.5$ against $19$ with BB-ABR), which is quite better for the users' perception.\\

In conclusion, when the knowledge of the future throughput is perfect, NEWCAST can perform better than the baseline TB-ABR and BB-ABR algorithms. By mean of a wise calibration of the value of $\pi$, the tradeoff between system utilization cost and QoE can be steered to either save more resources or increase the average quality. In all cases, the number of quality switching remains the most suitable for the end user's perception.

\section{Framework design and implementation} \label{sec:experiments}
\subsection{NEWCAST interactions with real video streaming entities}

In real environments, NEWCAST must be implemented at the client side as an independent framework.
It must be able to communicate the threshold $\alpha_{th}$ to the network scheduler and the set of video bitrates $\gamma_{th}$ to the media player as described in Fig.\ref{fig:framework}.
We can imagine for sending $\alpha_{th}$ a kind of a cross layer that also allows to apply the threshold-based transmission scheme. The set of video bitrates $\gamma_{th}$, however, can be directly sent to the player just at the beginning of the streaming session. These bitrates will then be consecutively requested by the player to the streaming server. Note that, in our analytical model, the variable $\gamma_{th}$ was set to describe the variation of the video bitrate \textit{in function of time}, in real implementation, the player will not use it that way, it will rather use the bitrate variation \textit{in function of the segments' orders}, which can be directly returned by NEWCAST. In Fig.~\ref{fig:interactions} we show the sequence diagram of the video streaming process using NEWCAST.

 \begin{figure}[t]
 \centering
 	\includegraphics [width=6cm] {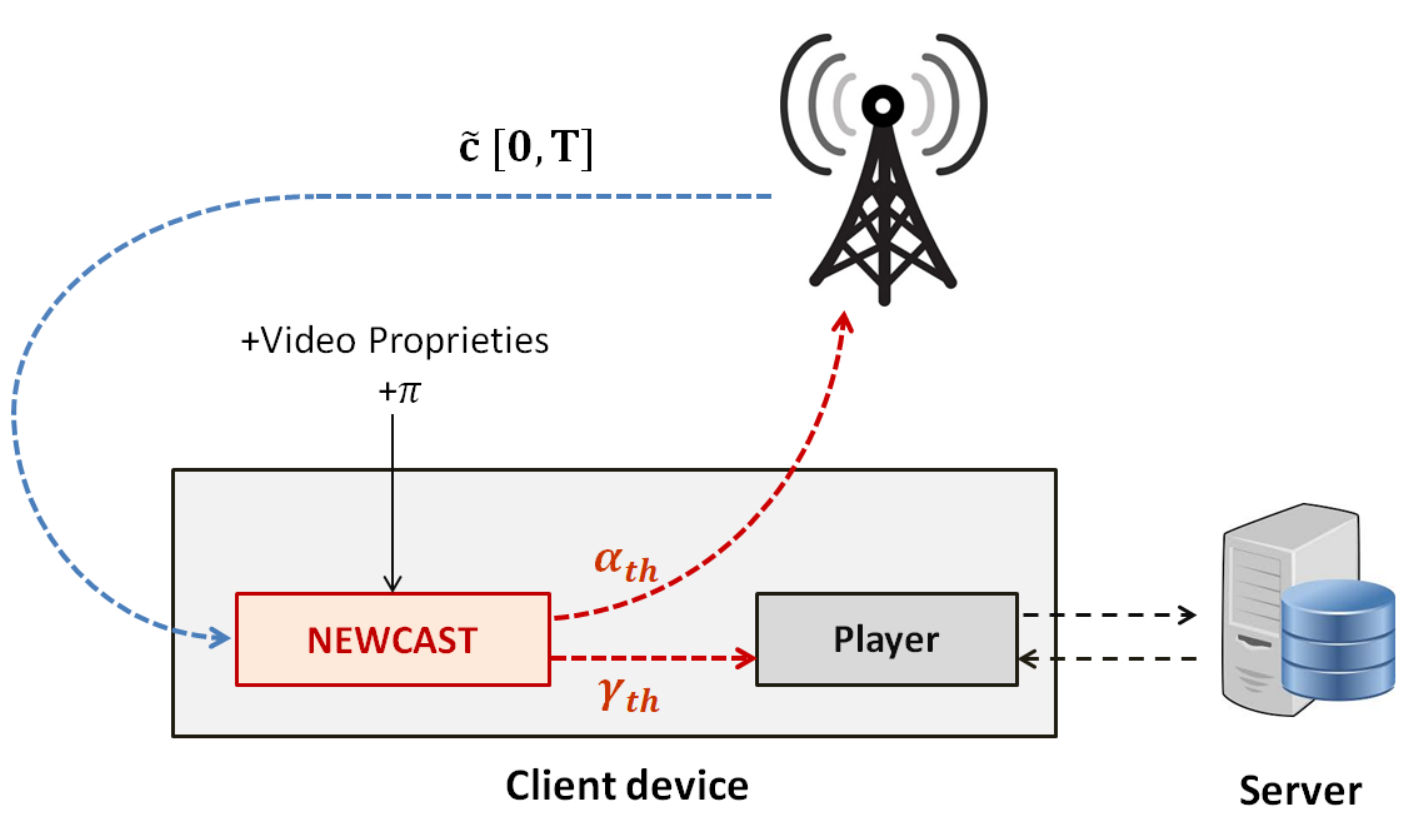}
	\caption{Illustration of NEWCAST interactions with the network scheduler and the media player.}
	\label{fig:framework}
 \end{figure}

 \begin{figure}[t]
 \begin{center}
 	\includegraphics[width=9.5cm]{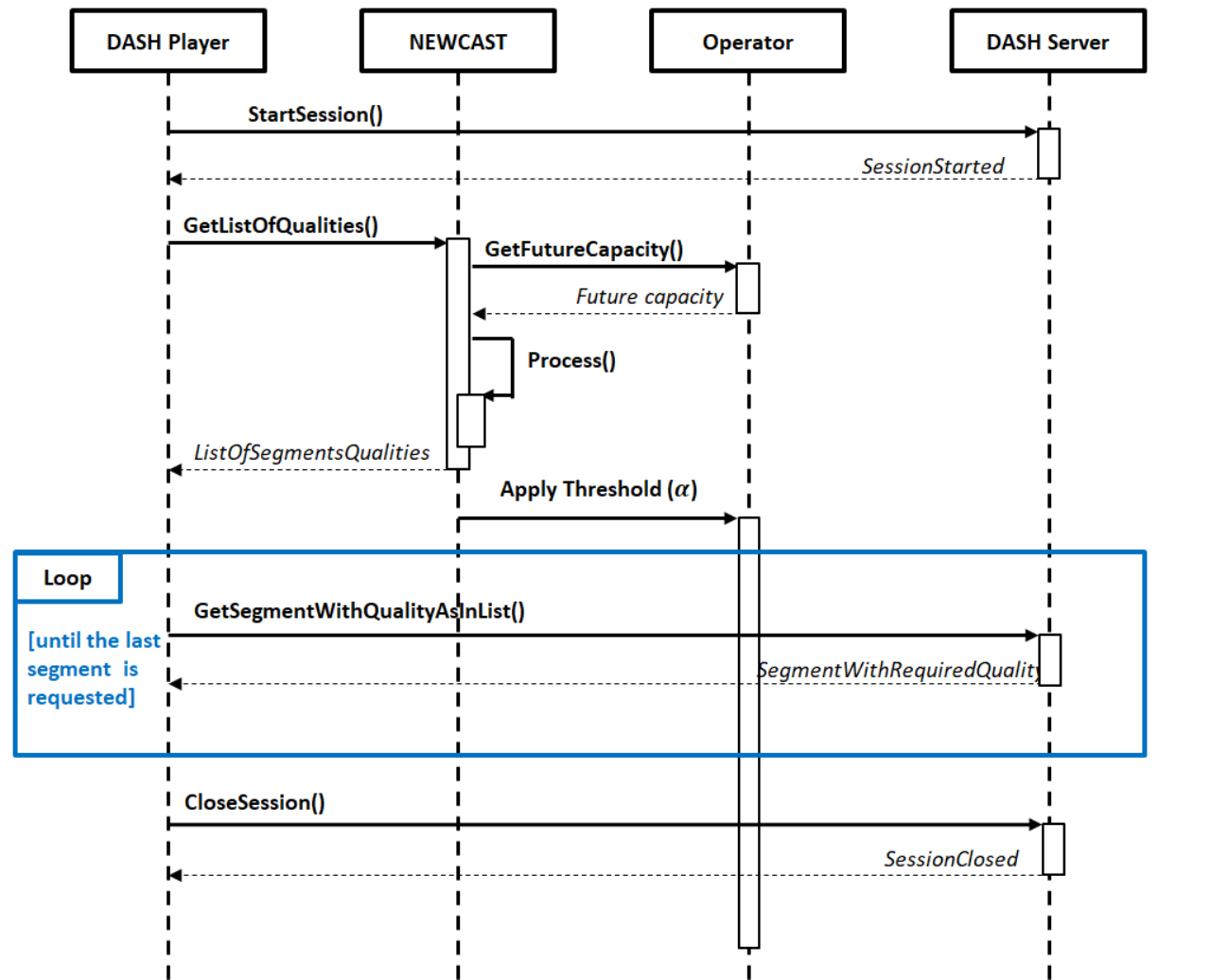}
 \vspace{-.1cm}
	\caption{Sequence diagram of a video streaming session using NEWCAST.}
	\label{fig:interactions}
 \end{center}
 \vspace{-0.3cm}
 \end{figure}

\subsection{Implementation tools and environment}

We use a Linux environment with two virtual machines: one is used as a DASH server and the other is used as a DASH client. In the DASH server, we install Apache and put inside the Dashjs framework \cite{dashjs} with the Envivio video segments encoded at different quality levels \cite{envivio}. In the DASH client, we only install Google chrome browser. We configure the two virtual machines to be able to communicate through their Ethernet interfaces. To emulate the network schedule and make the bandwidth between the two machines follow a predefined variation (considered as the predicted capacity), we use the Linux tc-tool for traffic shaping as shown in Fig.~\ref{fig:RealSystem}. To develop NEWCAST and make it interact with the Dashjs player, we use Javascript and other basic web languages. NEWCAST is put with the player call function in a same \textit{.php} file that the DASH client requests to start the video streaming session. A video demo of NEWCAST is put available online in \cite{demoNEWCAST}. In Table~\ref{detailsImplementation}, we put more details on the hardware/software tools used for the implementation.

 \begin{figure}[t]
 \begin{center}
 	\includegraphics[width=8.5cm]{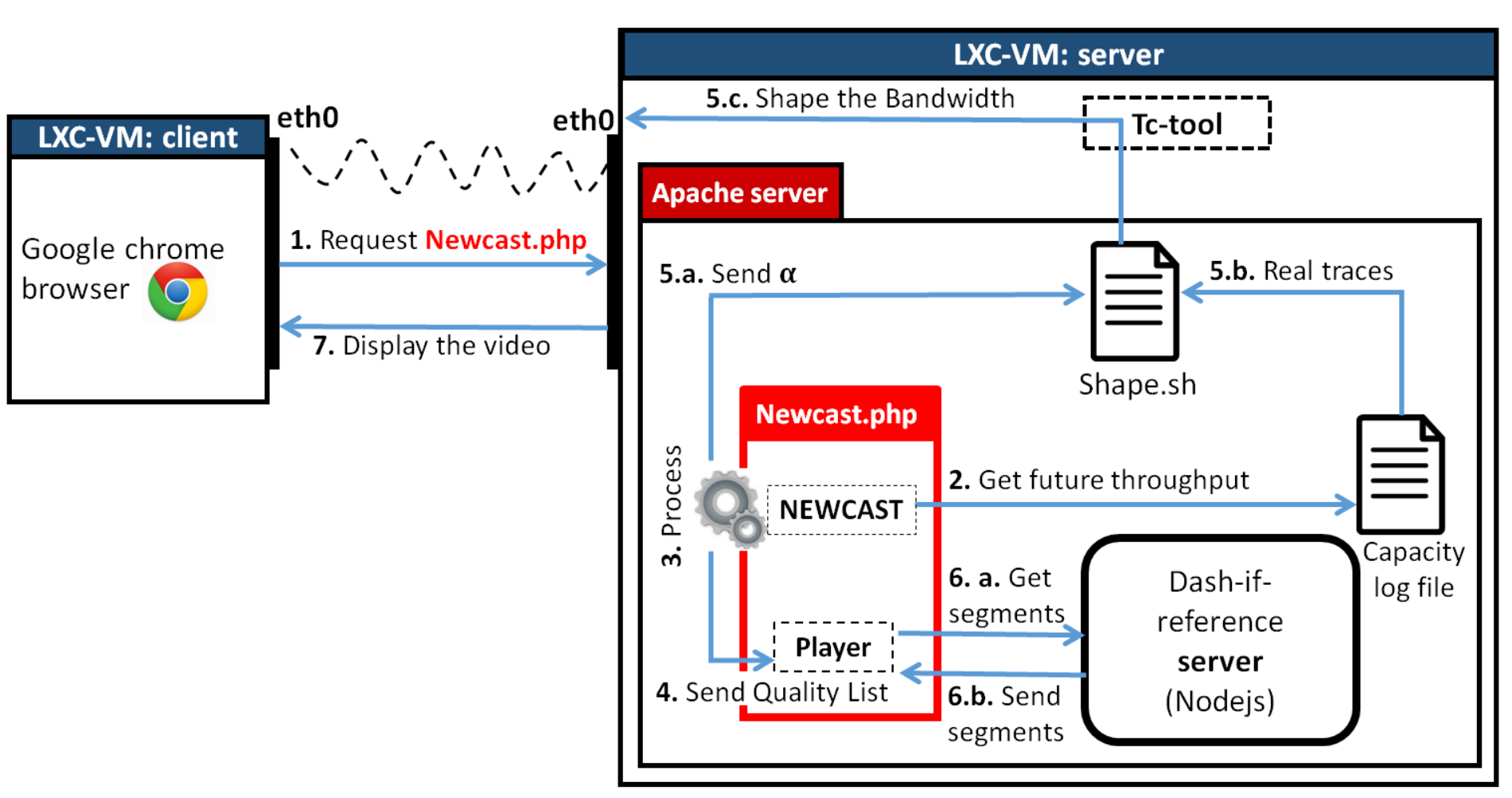}
 \vspace{-.1cm}
	\caption{Architecture of the system used for experiments.}
	\label{fig:RealSystem}
 \end{center}
 \vspace{-.3cm}
 \end{figure}

\begin{table}
\begin{center}
\begin{tabular}{|c|c|}
\hline
\textbf{Host machine} & Optiplex 7010 Intel Core i7-3770 CPU 3.40Ghz \\
\hline
\textbf{Distribution} & Ubuntu 14.04.5 LTS \\
\hline
\textbf{Virtual machines} & Linux Container Lxc 1.0.9 \\
\hline
\textbf{Apache} & 2.4.7 \\
\hline
\textbf{Dashjs} & 2.4.0 \\
\hline
\textbf{Google Chrome} & 55.0.2883.87 \\
\hline
\end{tabular}
	\end{center}
	\caption{Details on the software/hardware tools used for real implementation.}
	\label{detailsImplementation}
 \vspace{-0.5cm}
\end{table}

\subsection{Requirements for real implementation }
\subsubsection{Changes inside the Dashjs framework}
To make NEWCAST interact with the Dashjs framework, we made some changes inside the media player:
(i) A new event was added to the player class to detect the moments where a segment of type \textit{"video"} is completely loaded to the client.
(ii) The restrictions on the playback buffer size defined at the \textit{"MediaPlayerModel.js"} file were changed to fit the infinite buffer size assumption, since, otherwise, the player will remove the earliest played segments and, in some cases, delay the requests of the coming segments.
(iii) The threshold of prefetching \textit{after a stall happens} was changed inside the \textit{checkIfSufficientBuffer()} function to fit the prefetching threshold used by NEWCAST.
\subsubsection{Required player APIs}
Two essential APIs are actually responsible for the interaction between NEWCAST and the Dashjs framework: The \textit{setAutoSwitchQuality()} API to disable the quality auto-switch mode of the player and the \textit{setQualityFor()} API to enforce the quality of the coming segments.

\subsubsection{Traffic configuration}

To make the \textit{real} throughput compliant to the throughput $r$ modelled by NEWCAST, we processed as follows:
(i) We deleted the \textit{"audio traffic"} description from the \textit{.mpd} file since, in our study, we are only interested in \textit{video traffic}.
(ii) We added Apache to the Linux sudoer list to allow it use the tc-tool functions and shape the bandwidth in parallel to the streaming.
(iii) As we found that the average duration of a real \textit{segment-request} is equal to $0.06$s, which is not insignificant as was assumed in our theoretical model, we were considering, for the implementation, each \textit{segment-request} as a virtual file of size $0.06$ multiplied by the predicted throughput at the considered second.
(iv) A long stall duration caused by a high threshold $\alpha_{th}$ may lead the session to be closed. To avoid such situations we were disabling the threshold transmission schedule during prefetching when a stall happens.
	
\subsection{Validation through experiments}

\begin{figure}[t]
 \begin{center}
 	\includegraphics[width=9cm]{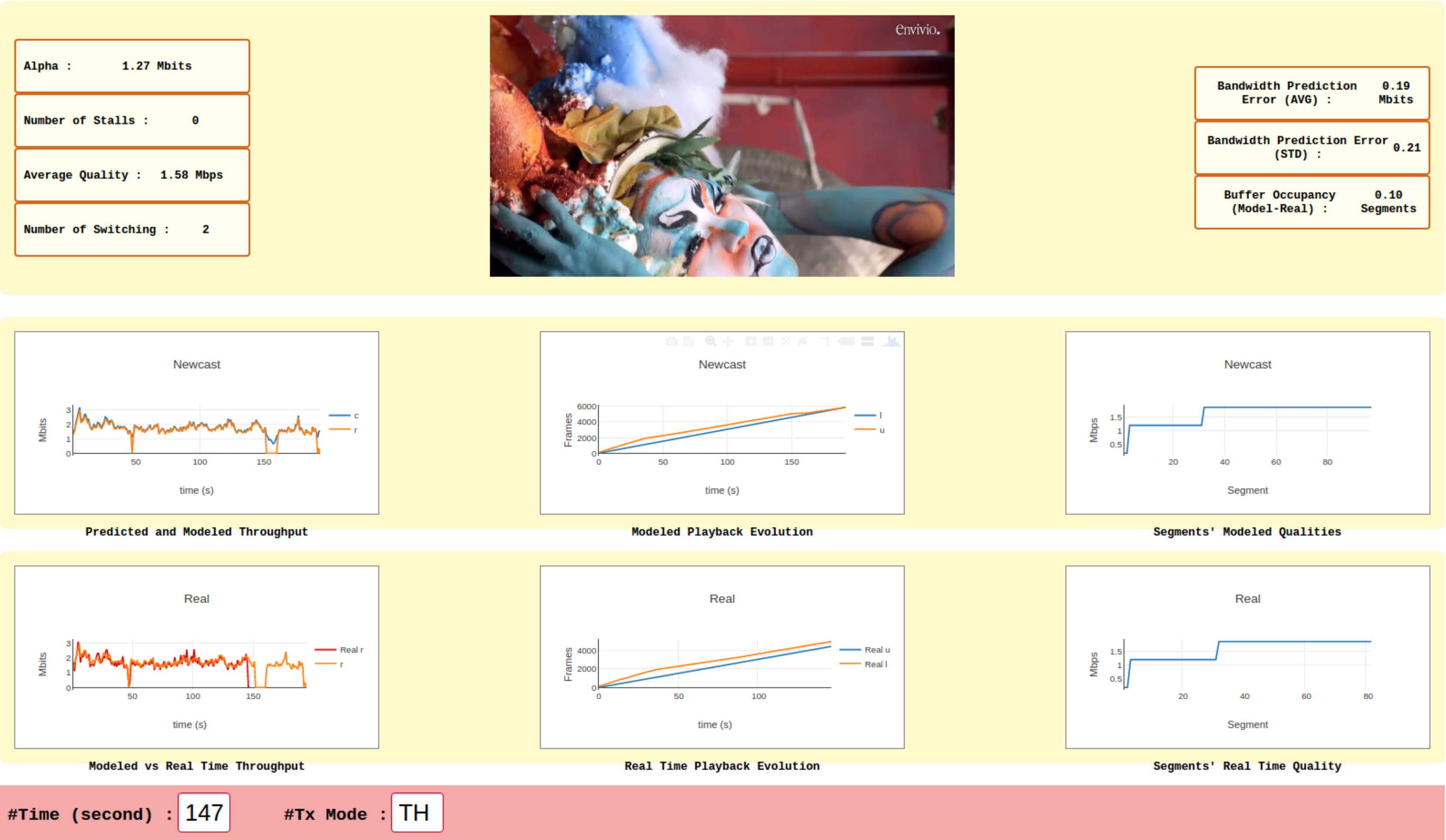}
 \vspace{-.2cm}
	\caption{Screenshot of the graphical interface.}
	\label{fig:ScreenShotNewcast}
 \end{center}
 \vspace{-.5cm}
 \end{figure}

\begin{figure}[t]
 \begin{center}
 	\includegraphics[width=8cm]{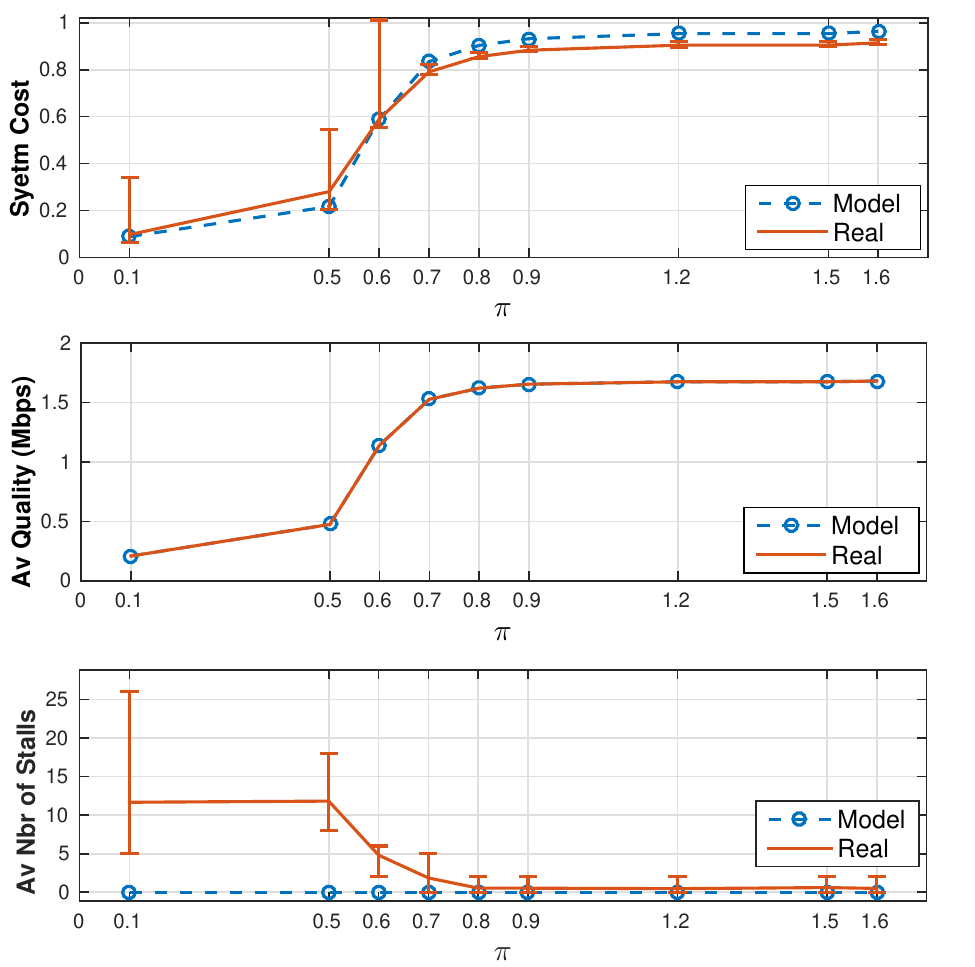}
 \vspace{-.2cm}
	\caption{NEWCAST performance in real environment.}
	\label{fig:NewcastWithTctool}
 \end{center}
 \vspace{-.5cm}
 \end{figure}

To supervise the system behaviour in real time, we developed a graphical interface in which we plot the real time throughput variation, the real time buffer evolution and the real time video bitrate alongside with the strategies modeled by NEWCAST, as shown in the screenshot of Fig.\ref{fig:ScreenShotNewcast} and in our video demo \cite{demoNEWCAST}. We conducted the same experiment several times using one of the throughput logs available in \cite{dataset} and different values of $\pi$. Results shown by Fig.~\ref{fig:NewcastWithTctool} depict a high instability of the system behaviour when the value of $\pi$ is small (between $0.1$ and $0.7$), i.e., when the threshold $\alpha_{\pi}$ is high. This instability actually induced high numbers of video stalls. More stability, however, is noticed when the value of $\pi$ is high (between $0.8$ and $1.6$). A noteworthy observation here is in the fact that the real system reacts very closely to what was modeled by NEWCAST. The difference in the system utilization cost is very small (approximately equal to $5.2\%$) and the average number of video stalls too (almost equal to $0.53$). Although we were conducting the same experiments for each value of $\pi$, the system behaviour was variable. We link this, mainly, to the casual errors of the bandwidth shaping and to the variation of the \textit{segment-request} duration. Overall, these results offer hope that, under high values of $\pi$, the exploitation of NEWCAST in real environments becomes feasible, unless an accurate throughput prediction is available. Under low values of $\pi$, however, the quality of the streaming risks to be degraded since the system becomes sensitive to the tiniest prediction error.

\section{Conclusion}\label{sec:conc}
In this paper, we have developed a new framework called NEWCAST, for optimizing the delivery of video streaming content under the knowledge of future capacity. This framework has been designed to balance the system utilization cost and some key QoE metrics such as average video quality and rebuffering events. From an implementation point of view, results have shown the possibility to use NEWCAST as an online algorithm (well suited for dynamic adaptive streaming over HTTP).
Real experiments conducted with a real DASH player have shown that NEWCAST can be efficiently used in real-world streaming provided that the throughput is accurately estimated. Interesting future directions consist in incorporating errors in the throughput prediction to see how to improve the robustness of our approach.

\section*{Acknowledgement}
This work has been carried out in the framework of the ANR IDEFIX project, funded by the ANR under the contract number ANR-13-INFR-0006.
\bibliographystyle{IEEEtran}
\bibliography{ImenBib}

\end{document}